\newcommand{\da}{\downarrow}
\newcommand{\ua}{\uparrow}
\newcommand{\Da}{\Downarrow}
\newcommand{\Ua}{\Uparrow}
\DeclareRobustCommand*{\citen}[1]{%
  \begingroup
    \romannumeral-`\x 
    \setcitestyle{numbers}%
    \cite{#1}%
  \endgroup   
}
\newtheorem{theorem}{Theorem}[section]
\newtheorem{lemma}[theorem]{Lemma}
\begin{document}

\title{Frequency-encoded linear cluster states with coherent Raman photons}

\author{Dale Scerri}
\email{ds32@hw.ac.uk}
\author{Ralph N. E. Malein}
\author{Brian D. Gerardot}
\author{Erik M. Gauger}
\email{e.gauger@hw.ac.uk}
 
\affiliation{SUPA, Institute of Photonics and Quantum Sciences, Heriot-Watt University, EH14 4AS, United Kingdom.}
\date{\today}

\begin{abstract}
Entangled multi-qubit states are an essential resource for quantum information and computation. Solid-state emitters can mediate interactions between subsequently emitted photons via their spin, thus offering a route towards generating entangled multi-photon states. However, existing schemes typically rely on the excitation-relaxation of the emitter, resulting in single photons limited by the emitter's radiative lifetime, suffering from considerable practical limitations, for self-assembled quantum dots most notably the limited spin coherence time due to Overhauser magnetic field fluctuations. 
We here propose an alternative approach based on a spin-$\Lambda$ system that overcomes the limitations of previous proposals. Studying the example of  spin-flip Raman scattering of self-assembled quantum dots in Voigt geometry, we argue that weakly driven hole spins constitute a promising platform for the practical generation of frequency-entangled photonic cluster states.
\end{abstract}

\maketitle

\section{Introduction}

Robust highly-entangled `cluster' states are of paramount importance for measurement-based quantum computation \cite{Briegel2001b, Briegel2001a, Raussendorf2007, OBrien2009}. The experimental challenges of obtaining high-dimensional cluster states can be significantly reduced by probabilistically `fusing' qubits from adjacent 1D linear cluster (LC) states \cite{TerryFusion2005, Herrera2010, Weinstein2011}, or `glueing' together micro-clusters \cite{Nielsen2004}. Several platforms for generating photonic LC states have been proposed, varying from condensed matter emitters such as quantum dots \cite{Terry2009,  Denning2017, Barrett2005, Lin2008, Herrera2010, Schwartz2016} and crystal defects \cite{Economou2016, Barrett2005} to parametric downconversion \cite{Vallone2007, Zou2005}, all presenting their own sets of advantages and challenges. Solid-state-based protocols typically make use of pulsed excitations to drive optical transitions in a matter qubit to entangle the emitter's spin degree of freedom with the polarisation of subsequently emitted photons. Encouragingly, a photonic LC of length two (LC$_2$) has recently been demonstrated experimentally, showing that the entanglement in this setup could persist for up to five consecutively emitted photons\cite{Schwartz2016}.
\begin{figure}[t!]
\centering
\includegraphics[width=\linewidth]{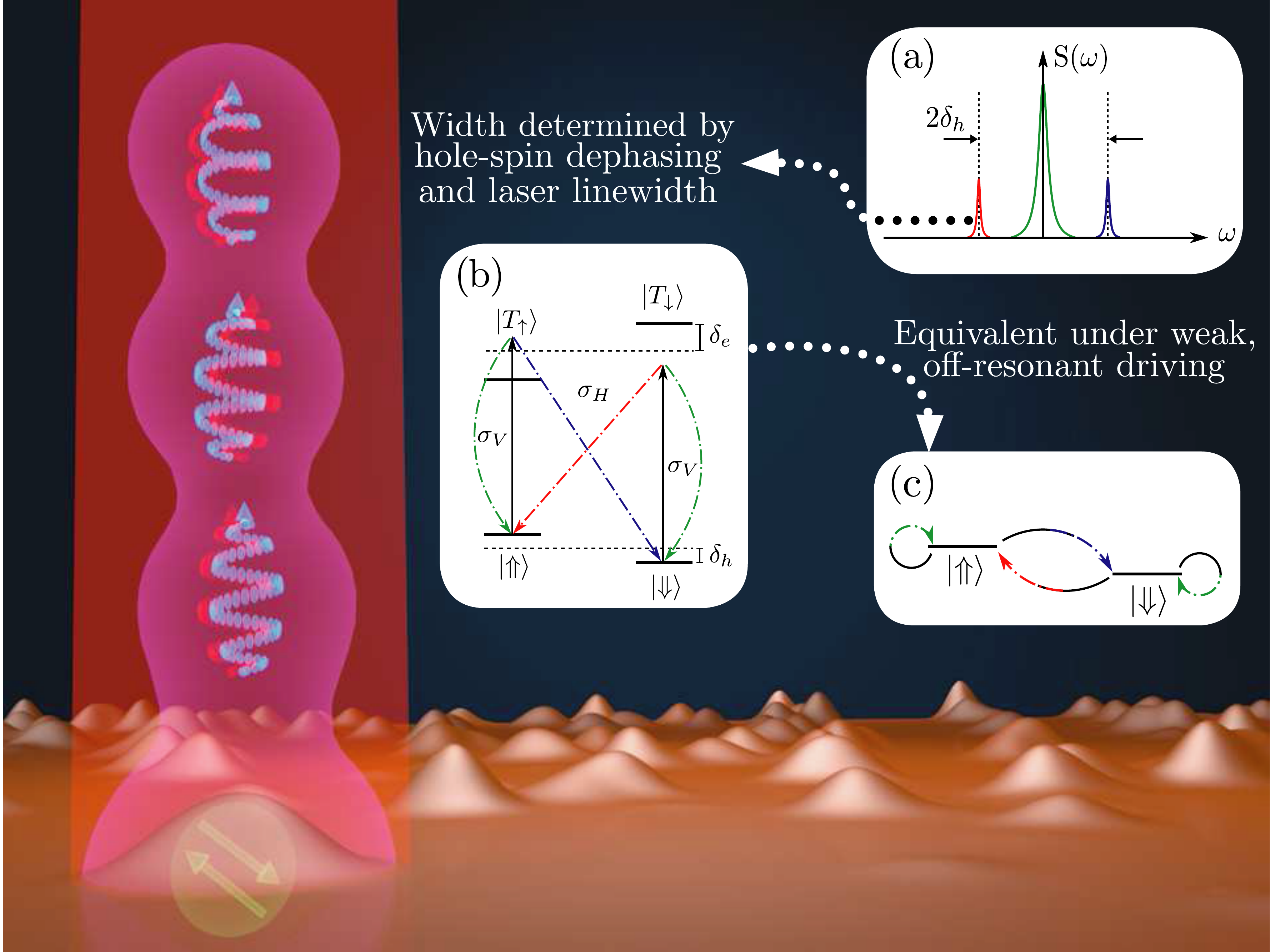} 		
\caption{\textbf{Background}: artistic depiction of our protocol. \textbf{Inset a)}: Schematic of the emission spectrum showing the presence of the Raman sidebands. \textbf{Inset b)}: Schematic illustration of the scattering processes involving the two ground hole-spin states. The black arrows denote the laser driving on resonance with the unperturbed transitions (dashed lines), whereas the green, red and blue arrows denote the Rayleigh, red detuned and blue detuned events, respectively. \textbf{Inset c)}: Simple schematic of the scattering processes involved in the weak, detuned driving limit.}
\label{Fig1}
\end{figure}

Whilst conceptually elegant and ostensibly deterministic, real-world imperfections pose significant barriers to the experimental realisation of protocols such as the ones introduced by Refs.~\onlinecite{Terry2009, Economou2010, Denning2017, Terry2018}. For the III-V platform, these include phonon-dephasing of excited states \cite{Iles-Smith2017}, modified selection rules as a consequence of hole mixing as well as a transverse (Voigt) component of the Overhauser field \cite{Loss2002, Testelin2009, Hansom2014, Malein2016}, and limited spin lifetimes due to Overhauser field fluctuations \cite{Loss2002, Merkulov2002, Braun2002, Chekhovich2013, Urbaszek2013}. Decoupling techniques \cite{Viola1999, Witzel2007, Zhang2007, Uhrig2007, Uhrig2008, Bluhm2010, Stockill2016} and control of the nuclear environment \cite{Eble2006, Petta2008, Urbaszek2013, Majcher2017} overcome the latter but provide no remedy for other error sources. Shortcomings of real quantum dots thus put a limit to the size of cluster state achievable and render genuinely deterministic operation impractical for the current experimental state-of-the-art.

In contrast to direct pulsed excitation, we here propose employing a weak (sub-saturation) continuous wave (c.w.) laser to drive the Zeeman-detuned transitions of a hole-spin for entangling the spin with the frequency of Raman scattered photon \footnote{The small Rabi energies entails that any dephasing due to the optical AC-Stark shift is negligible, although in principle AC-Stark shift tuning could be employed to significantly reduce the dephasing due to charge noise \cite{Ramsay2016}.}. We show such a setup overcomes the experimental barriers suffered by previous schemes, which rely on excitation and relaxation of the emitter: in particular, our protocol is impervious to phonon dephasing, robust against fluctuations of the Overhauser field, and unaffected by heavy-hole (hh) light-hole (lh) mixing. This comes at the cost making the protocol probabilistic, however, we show that LC states of sufficient length to serve as building blocks for fusion \cite{TerryFusion2005} can be produced at high rates and fidelity based on current experimental capabilities. Furthermore, extended versions of our protocol mitigating its probabilistic limitations (whilst keeping its robustness) are possible (see Sec.~\ref{app:Deter} of the Appendix). Our work thus shows that the significant divide between elegant theoretical proposals and experimental progress in the generation of linear cluster states  can be overcome. The approach we present has scope for extension to other quantum photonic platforms sharing a similar $\Lambda$-structure, including defects in wide-bandgap semiconductors\cite{Becker2016, Yale2013} and  superconducting artificial atoms\cite{Novikov2015, Liu2016, Premaratne2017}. Mathematical detail and extensions of the main protocol can be found in the Appendix Sec.~\ref{app:proof}.
\begin{figure}[t!]
\centering
\includegraphics[width=\linewidth]{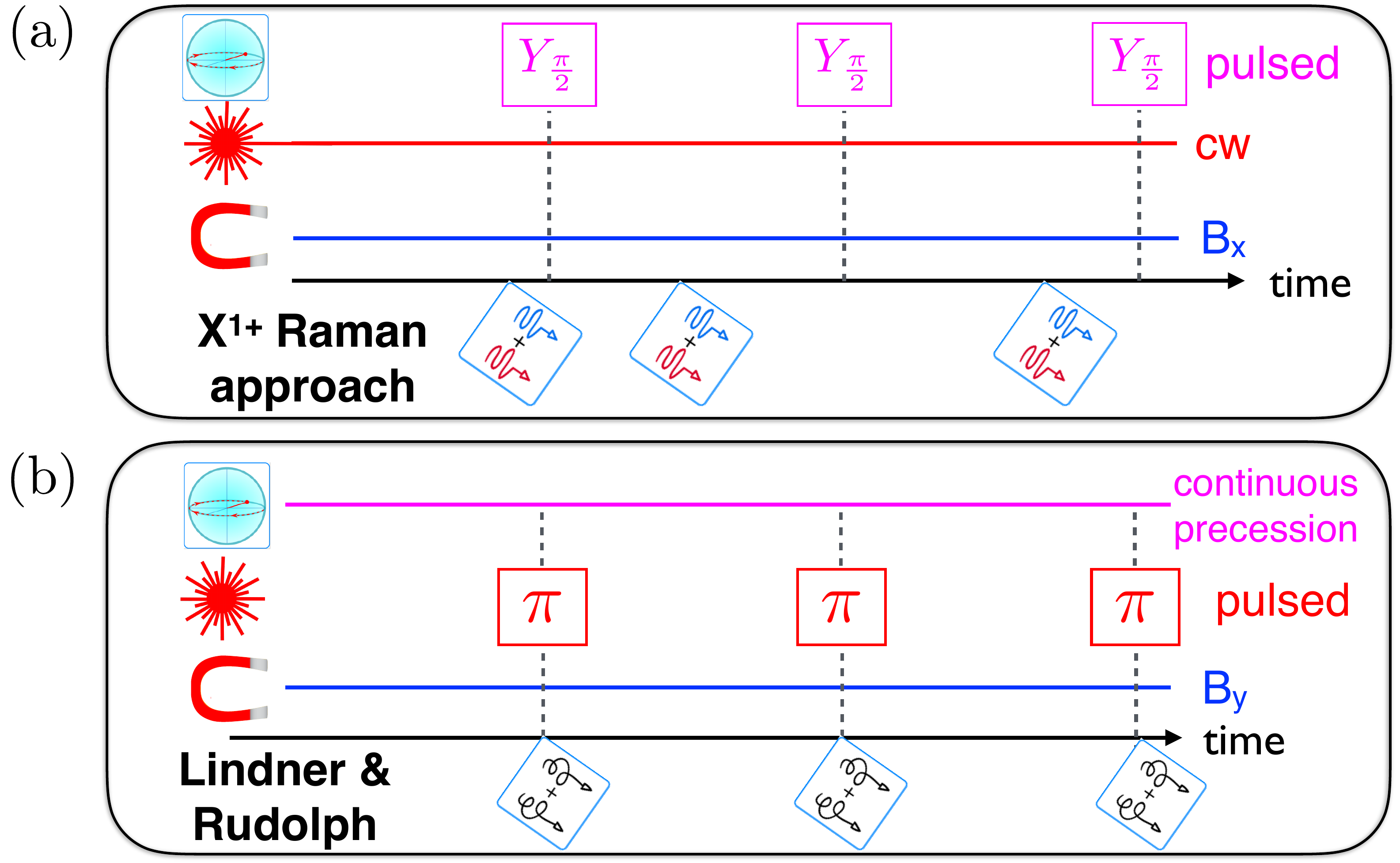} 
\caption{\textbf{a)} Schematic representation of our protocol. The spin precesses in a constant magnetic field in Voigt geometry. Driven weakly and off-resonantly, the hole-spin scatters Raman-detuned photons at random intervals. The timing between Y-pulses $T_B$ should be chosen so as to maximise the probability of getting a single scattering event between the pulses. \textbf{b)} Schematic of the original Lindner and Rudolph proposal for comparison. Instead of a gated Y-rotation, an external field in Voigt geometry causes the spin to precess continuously, with optical $\pi$-pulses applied at the appropriate times to excite the emitter.}
\label{timeline}
\end{figure}

\section{Model}

Despite its many attractive features for quantum metrology and quantum information\cite{Loss1998, Merkulov2002}, the spin of an electron trapped in an epitaxial quantum dot suffers from rapid ensemble dephasing due to the hyperfine interaction with $\sim 10^4 - 10^6$ randomly fluctuating nuclear spins of the host material. This typically results in a loss of coherence on the order of nanoseconds\cite{Merkulov2002, Braun2002, Chekhovich2013, Malein2016}. By contrast, the $p$-orbital-like wavefunction of hole spin states vanishes at the location of the nuclear spins, which suppresses the Fermi-contact interaction, leaving only the much weaker dipole-dipole interaction as the main source of dephasing \cite{Fischer2008, Testelin2009, Chekhovich2011, Fallahi2010}. Strain lifts the degeneracy of the $J = 3/2$ hole states, resulting in energetically split heavy ($J_z = \pm 3/2$) and light ($J_z = \pm 1/2$) holes; the former being closer to the valence band edge (see Fig.~\ref{Fig1}). Rashba or Dresselhaus spin-orbit coupling may may play a role in limiting factor for the application of these hole spins in quantum information, which were shown, both theoretically \cite{Bulaev2005} and experimentally \cite{DeGreve2011}, to limit the spin relaxation rate. However, we note that this spin-orbit coupling is still more detrimental to electron than hole spins \cite{Bulaev2005}.
\begin{figure}[t!]
\centering
\includegraphics[width=\linewidth]{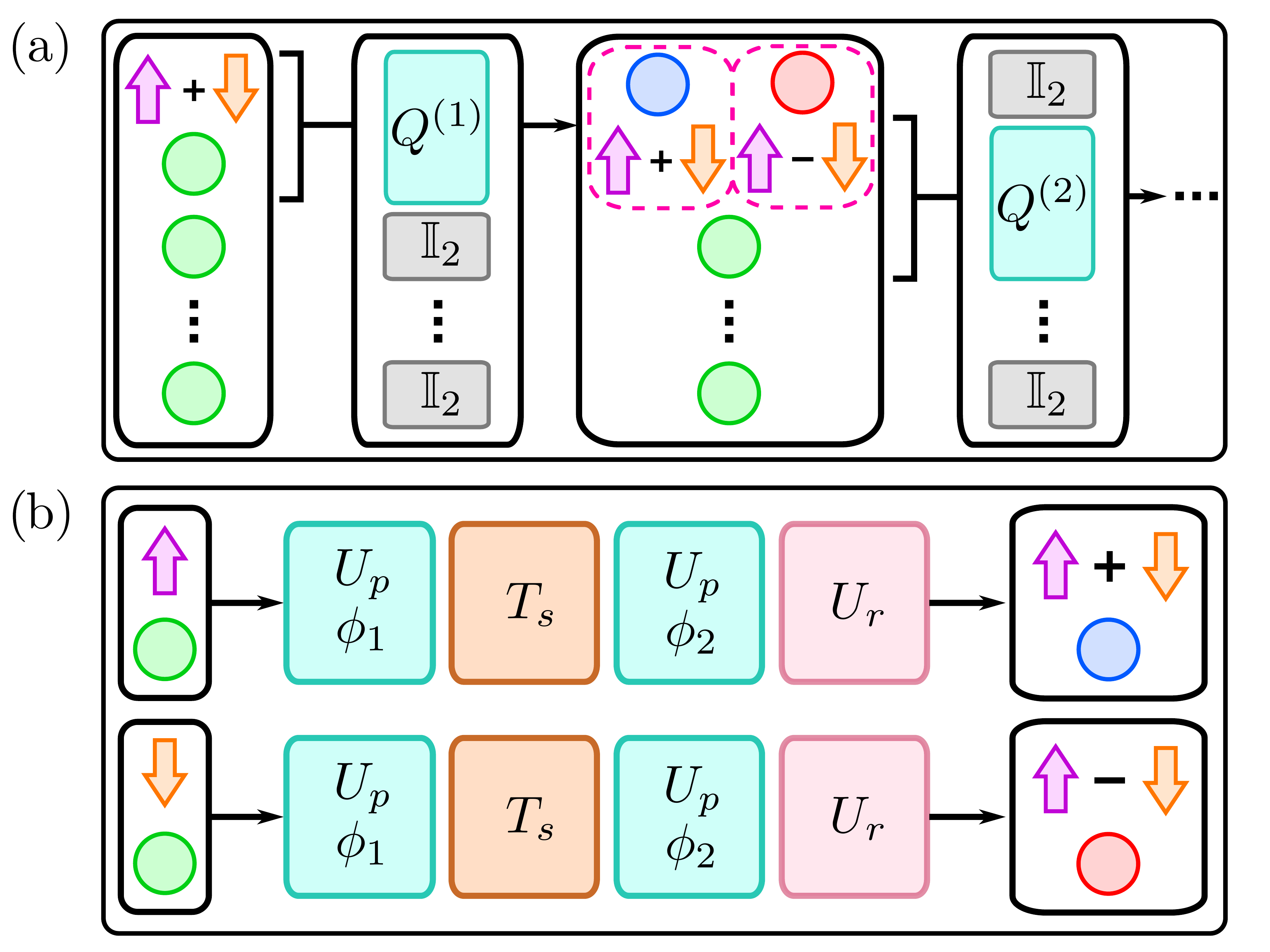} 
\caption{\textbf{a)}: Diagrammatic representation of the spin-photon entangling process for the first emitted photon. The initial spin state ($\Ket{\Ua} + \Ket{\Da}$) and first laser photon to be scattered (upmost green circle) undergo a joint transformation $Q^{(1)}$, resulting either in a red or a blue-detuned Raman photon that is entangled with the hole spin. $Q^{(2)}$ includes the second Raman process and entangles the newly scattered with the previous photon. \textbf{b)}: Breakdown of the $Q^{(i)}$ operation through its action on spin basis states: the sequence of operations transforms includes two periods of free spin precession $U_p$, the Raman scattering process $T_S$, and a $\pi / 2$ $Y$-rotation $U_r$.
A full matrix representation of $Q^{(i)}$ is given in Appendix Sec.~\ref{app:matop}.}
\label{protocol}
\end{figure}

However, chiefly due to strain anisotropy in the QD, a finite admixture of these states is always present (the effects on hole-based multi-photon entanglement schemes are briefly discussed in the Sec.~\ref{app:mix} of the Appendix). In the following, we denote the (Zeeman) spin state of the heavy hole as $\Ket{\Ua}$ and $\Ket{\Da}$ whereas the electron spin states are $\Ket{\ua}$ and $\Ket{\da}$. In this notation, the positively charged X$^{1+}$ transition $\Ket{\Ua} \leftrightarrow \Ket{T_\ua} = \Ket{\Ua \Da, \ua}$ couples to $\sigma^-$ polarised light and $\Ket{\Da} \leftrightarrow \Ket{T_\da} = \Ket{\Ua \Da, \da}$ to $\sigma^+$ light. In the presence of an external magnetic field in Voigt geometry, the otherwise dipole-forbidden diagonal Raman transitions are unlocked~(see Fig.~\ref{Fig1})\cite{Emary2007}. For weakly off-resonantly driven hole spins, the width of these Raman transitions is solely limited by the laser linewidth and ground state spin dephasing \cite{Imamoglu2009,Sun2016}, making them attractive candidates for single photon sources, as well as being attractive spin-spin qubit entanglers due to the spin's rich level scheme and selection rules \cite{Delteil2015, Stockill2017}.

Wishing to exploit such Raman photons for LC generation we consider a self-assembled quantum dot in the Voigt geometry, with the applied magnetic field $B$ strong enough to dominate over nuclear Overhauser field fluctuations (see Sec.~\ref{app:over} of the Appendix). The applied $B$-field (w.l.o.g. along the $x$-axis) then defines the basis of spin eigenstates. We also include a c.w.~laser field that is resonant with the unperturbed transition of the QD (Fig.~\ref{Fig1}a). In a frame rotating with the laser frequency (after performing the RWA), the QD Hamiltonian in the Zeeman basis reads
\begin{align}
\begin{split}
H =& \delta_h \left(\Ket{\Ua} \Bra{\Ua} - \Ket{\Da} \Bra{\Da} \right) +  \delta_e \left(\Ket{T_\da} \Bra{T_\da} - \Ket{T_\ua} \Bra{T_\ua} \right) \\
       -& \left( \frac{\Omega_H}{2}\Ket{T_\ua} \Bra{\Da} + \frac{\Omega_H}{2}\Ket{T_\da} \Bra{\Ua}  + \frac{\Omega_V}{2}\Ket{T_\da} \Bra{\Da}  \right. \\
       +& \left.  \frac{\Omega_V}{2}\Ket{T_\ua} \Bra{\Ua} + \mathrm{H.c.} \right) ~,
\end{split}
\end{align}
where $\delta_{e,h}$ are the electron and hole Zeeman splittings, respectively, $\Omega_{H / V}$ are the Rabi frequencies for the horizontally/vertically-polarised transitions, and $\mathrm{H.c.}$ denotes the Hermitian conjugate. We simulate the scattering events via Monte Carlo trajectories with jump operators for all allowed transitions, occurring with equal rates $\gamma$. This results in an  effective (non-Hermitian) Hamiltonian $H_{eff} = H - \frac{i \hbar}{2} \gamma \sum_n C_n^\dagger C_n$, where the sum goes over the collapse operators \cite{Johansson2012, Johansson2013}. This non-unitary evolution of the system generates photons outside of the QD's Hilbert space, which build the LC states we are interested in. More specifically, each `experiment' is simulated as a quantum jump simulation, where an LC$_n$ state is successfully measured if the correct $n$ scattering events occur within the designated time-bins. The success rate is then calculated by averaging over the results.

\section{Protocol}

Fig.~\ref{Fig1}b shows that the emission of blue and red-detuned Raman spin-flip photons from a single quantum dot must alternate, provided that the scattering rate is faster than the hole spin-flip time. We build on this correlation between spin and photon colour to develop a protocol for generating an entangled LC state (filtering out Rayleigh scattered photons via their orthogonal polarisation). As an intrinsic drawback of Raman spin-flips, the time at which a photon is scattered is not known prior to its detection. In the following, we assume that there is exactly one Raman scattering event per time-bin $T_B$ (albeit at a random time within the bin, see Fig.~\ref{timeline}.). The overall probability and ways of circumventing this  limitation\footnote{In practice, this assumption limits the size of the LCs that can be produced in this approach to less than ten.} will be discussed later. Fig.~\ref{protocol} contains a diagrammatic representation of a successful run of our protocol. Let us trace the evolution of the joint spin-photon-state step by step: we start with the hole spin initialised in the superposition state $\Ket{\Ua}+\Ket{\Da}$ (ignoring normalisation factors) and precessing at its Larmor frequency. Let the accumulated phase prior to the first scattering event be $\phi_1 = \delta_h \tau_1$ (denoted by the matrix $U_p(\phi_1)$ in Fig.~\ref{protocol}), then a Raman spin flip ($T_s$ in Fig.~\ref{protocol}) evolves the state to
\begin{equation}
\mathrm{e}^{-i \frac{\phi_1}{2}}\Ket{\Ua} + \mathrm{e}^{i \frac{\phi_1}{2}}\Ket{\Da} \rightarrow \mathrm{e}^{-i \frac{\phi_1}{2}}\Ket{\Da B_1} + \mathrm{e}^{i \frac{\phi_1}{2}}\Ket{\Ua R_1} ~,
\end{equation}
where the labels $B_1 (R_1)$ inside the ket denote the first emitted blue (red) photon. A subsequent period of free precession $\tau_2 = T_B - \tau_1$ until the end of the time-bin $T_B$ results in a phase $\phi_2 = \delta_h \tau_2$. We now apply a $\pi / 2$ $Y$-rotation ($U_r = Y_\frac{\pi}{2}$ in Fig.~\ref{protocol}), yielding the state
\begin{align}
\begin{split}
&\mathrm{e}^{-i \frac{\chi_1}{2}}\Ket{\Ua B_1} +\mathrm{e}^{-i \frac{\chi_1}{2}}\Ket{\Da B_1} \\
+ &\mathrm{e}^{i \frac{\chi_1}{2}}\Ket{\Ua R_1} - \mathrm{e}^{i \frac{\chi_1}{2}}\Ket{\Da R_1}~,
\end{split}
\end{align}
where $\chi_1 \coloneqq \phi_1 - \phi_2$. The next Raman scattering event will have been preceded by another spin precession angle $\phi_3$ resulting in
\begin{align}
\begin{split}
&\mathrm{e}^{-i \frac{\phi_3}{2}}\mathrm{e}^{-i \frac{\chi_1}{2}}\Ket{\Da B_1 B_2} +\mathrm{e}^{i \frac{\phi_3}{2}}\mathrm{e}^{-i \frac{\chi_1}{2}}\Ket{\Ua B_1 R_2}\\
+&\mathrm{e}^{-i \frac{\phi_3}{2}}\mathrm{e}^{i \frac{\chi_1}{2}}\Ket{\Da R_1 B_2} - \mathrm{e}^{i \frac{\phi_3}{2}}\mathrm{e}^{i \frac{\chi_1}{2}}\Ket{\Ua R_1 R_2}~.
\end{split}
\end{align}
The spin precesses further by $\phi_4$ before we apply the next $Y_\frac{\pi}{2}$ rotation, yielding
\begin{align}
\begin{split}
&\mathrm{e}^{-i \frac{\phi_3}{2}} \mathrm{e}^{i \frac{\phi_4}{2}} \mathrm{e}^{-i \frac{\chi_1}{2}}\Ket{\Da B_1 B_2} 
+\mathrm{e}^{i \frac{\phi_3}{2}}\mathrm{e}^{-i \frac{\phi_4}{2}}\mathrm{e}^{-i \frac{\chi_1}{2}}\Ket{\Ua B_1 R_2}\\
+&\mathrm{e}^{-i \frac{\phi_3}{2}}\mathrm{e}^{i \frac{\phi_4}{2}} \mathrm{e}^{i \frac{\chi_1}{2}}\Ket{\Da R_1 B_2} 
- \mathrm{e}^{i \frac{\phi_3}{2}}\mathrm{e}^{-i \frac{\phi_4}{2}}\mathrm{e}^{i \frac{\chi_1}{2}}\Ket{\Ua R_1 R_2} \\
\vspace{5mm}\\
\coloneqq &\mathrm{e}^{-i \frac{\chi_2}{2}}\mathrm{e}^{-i \frac{\chi_1}{2}}\Ket{\Da B_1 B_2} 
+\mathrm{e}^{i \frac{\chi_2}{2}}\mathrm{e}^{-i \frac{\chi_1}{2}}\Ket{\Ua B_1 R_2}\\
+&\mathrm{e}^{-i \frac{\chi_2}{2}}\mathrm{e}^{i \frac{\chi_1}{2}}\Ket{\Da R_1 B_2} 
- \mathrm{e}^{i \frac{\chi_2}{2}}\mathrm{e}^{i \frac{\chi_1}{2}}\Ket{\Ua R_1 R_2} ~.
\end{split}
\end{align}
Let us stop at this point and, for clarity, consider the resulting state without its free precession phases
\begin{equation}
\Ket{\Da B_1 B_2} + \Ket{\Ua B_1 R_2} + \Ket{\Da R_1 B_2} - \Ket{\Ua R_1 R_2}~.
\end{equation}
Using the photon qubit encoding $\Ket{B_i} = \Ket{1_i}$, $\Ket{R_i} = \Ket{0_i}$, the state following the final Y$_\frac{\pi}{2}$ rotation it is given by
\begin{align}
\begin{split}
&\Ket{\Ua 1_1 1_2} + \Ket{\Da 1_1 1_2} + \Ket{\Ua 1_1 0_2} - \Ket{\Da 1_1 0_2} \\
+ &\Ket{\Ua 0_1 1_2}  + \Ket{\Da 0_1 1_2} - \Ket{\Ua 0_1 0_2} + \Ket{\Da 0_1 0_2}~.
\end{split}
\end{align}
In Appendix Sec.~\ref{app:proof}, we show that, whether the spin is measured to be in the $\Ket{\Ua}$ or $\Ket{\Da}$ state, the resulting photonic state ($S^{(2)}_+$ or $S^{(2)}_-$, respectively) indeed corresponds to LC$_2$. Further, we show that the above protocol generalises trivially to the production of LC states of arbitrary length. Crucially, reintroducing the above precession phases keeps the state local-unitarily (LU) equivalent to LC$_2$. The phases become known post-measurement through the timestamps of the detection clicks, and in Sec.~\ref{app:Tom} of the Appendix, we discuss how to make allowances for them for a tomographic reconstruction of the LC state. \footnote{Time-stamping these photons does not impose any experimental challenges, as detector setups with $\approx 30$ ps readily resolve these phases due to the precession time being of the order of $\sim$2 ns for an external field of 100 mT.}
\begin{figure}[t!] 
\includegraphics[width=1\linewidth]{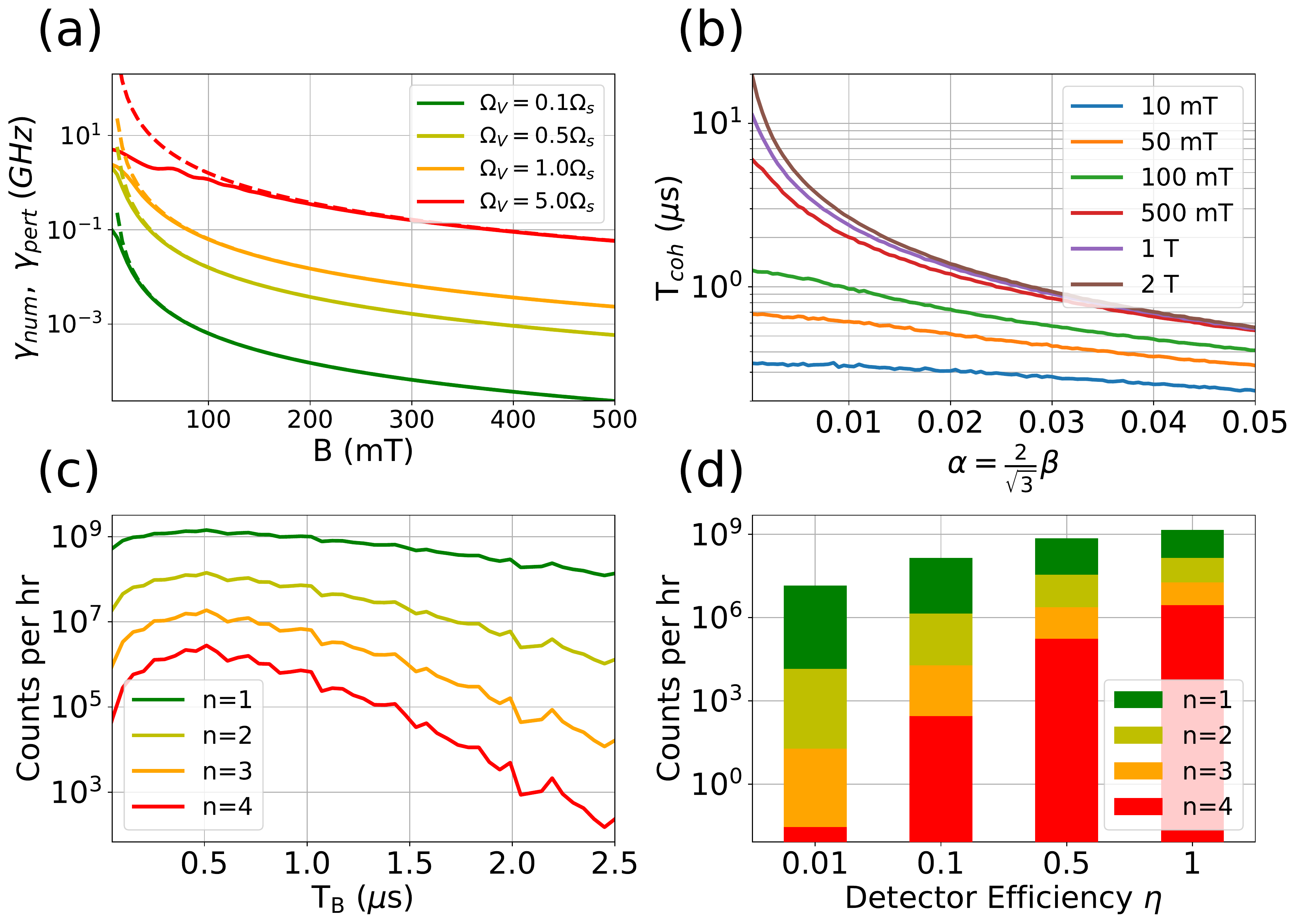}
\caption{\textbf{a)}: Perturbative calculation $\gamma_{pert}$ (dashed) and numerical value $\gamma_{num}$ (solid) of the Raman scattering rate as a function of $B$ for various driving strengths (from bottom to top: $\Omega_V = .1, .5, 1,$ and $5~\Omega_s$). \textbf{b)}: Coherence time for the pseudospin initially prepared perpendicular to the applied external magnetic field with mixing factor $\alpha = \frac{2}{\sqrt{3}} \beta$ for various external field strengths. The Overhauser field was taken to have a spread of 14mT (from bottom to top: $B = .01, .05, .1, .5, 1$ and 2 T). \textbf{c)}: Number of successful $n$-photon correlations per hour against $T_B$, with $\eta = 1$ for the ideal scenario $B = 100$mT, $\Omega = 0.2 {\gamma}/{\sqrt{2}}$, and $g^x_h = 0.1$ (from top to bottom: $n = 1, 2, 3$ and 4). \textbf{d)}: Success probabilities optimised for $T_B = 500$~ns [by minimising Eqn.~\eqref{FPeq}] against $\eta$, decreasing with increasing $n$.}
\label{probability}
\end{figure} 

\section{Results}

We now analyse the quality and success probability of our protocol. We begin with the rate for Raman scattering events followed by the success probability of a string of $n$ Raman photons with one per time-bin. Fig.~\ref{probability}a shows the Raman scattering rate and its dependence on both $B$ and $\Omega_V$. Comparison with numerical simulations shows that  this rate is well-approximated by the transition probability obtained by treating the weak driving field perturbatively to second order (see Appendix Sec.~\ref{app:pertrate})
\begin{equation}\label{gammapert}
\gamma_{pert} = \frac{1}{8} \frac{\Omega^2_V \gamma}{\Delta^2}~,
\end{equation}
provided $B \gtrsim 100$ mT and sub-saturation $\Omega_V \lesssim \gamma / \sqrt{2}$  (with $\gamma$ being the spontaneous emission rate), where $\Delta = \delta_e + \delta_h$. We proceed to determine the optimal duration $T_B$ (i.e.~the free precession time between $Y$-rotations) for maximising the probability of obtaining a single Raman event per time bin. Adopting $B = 100$~mT and $\Omega_V = 0.2 {\gamma}/{\sqrt{2}}$ (taking $\gamma = 1~$ns$^{-1}$), we calculate the number of successful trials with one Raman photon per $T_B$ interval (time interval between $U_r$ rotations in Fig.~\ref{protocol}) in $n$ successive time-bins. Fig.~\ref{probability}c illustrates the results of Monte-Carlo simulations using the QuTiP package \cite{Johansson2012, Johansson2013} for $n=1$ to $4$ scattering events, suggesting that $T_B \approx 0.5~\mathrm{\mu s}$ is close to optimal. We have the relation $P_s(n) = P_s(1)^n$ between the success probability for a single bin and that of $n$ bins.

Apart from addressing the possibility of having no Raman events within a time-bin, we also need to account for the possibility of `false-positives', i.e.~detecting only one of multiple Raman events occurring in a single time-bin, due to a photon detection efficiency $\eta < 1$\footnote{We assume $\eta$ is the probability of obtaining a detector click if a photon was produced by the QD, i.e.~it also includes any photon losses in the setup.)}. The probability of such $n$ photon false positives, $P_{fp}(n)$, is given by the simple relation:
\begin{align}\label{FPeq}
\begin{split}
P_{fp}(n) &= P_{nd}(n) \times P_d(1) \times P_s(n+1) \\
	       &=  {C}^{n+1}_n(1-\eta)^{n} \times \eta \times P_s(n+1)~,
\end{split}
\end{align}
where ${C}^{n+1}_n$ is the binomial coefficient, $P_d(n)$ [$P_{nd}(n)$] denotes the probability of detecting [not detecting] $n$ photons. We find that $T_B \approx 0.5~\mathrm{\mu s}$ remains optimal after taking this into account. Fig.~\ref{probability}d shows the rate of LC generation for $n=1$ to $4$ for different detector efficiencies. 

To demonstrate the robustness of our protocol against nuclear environment fluctuations, we calculate the fidelity between the state obtained with and without Overhauser field (both for the the same set of precession phases determined by randomly chosen scattering times). For a pure hh, only the $B^z_N$ Overhauser component perpendicular to the applied $B$-field affects the protocol [by randomly modifying direction and magnitude of the total $B$-field by $\arctan(B^{z}_N / B_{ext})$]. By contrast, a mixed hh--lh system suffers predominantly from the parallel $B^x_N$ component, to an extent determined by the mixing factor $\alpha$. This is also exemplified in a decreased spin coherence time from the ideal hh limit, as shown in Fig.~\ref{probability}b. Only considering this term, the following analytical expression (see Appendix Sec.~\ref{app:fidel}) captures the fidelity decay as a function of $T_B$:
\begin{equation}\label{fidelav}
\bar{\mathcal{F}}^{(1)} = \frac{1}{2} + \frac{\sqrt{2 \pi}}{4 T_B \delta B^x_N} \mathrm{erf}\left(\frac{T_B \delta B^x_N}{\sqrt{2}} \right)~,
\end{equation}
where $\bar{\mathcal{F}}^{(n)}$ denotes the average fidelity for a state of $n$ scattered photons (written for $n=1$ in Eq.~\eqref{fidelav} above), and $\delta B^x_N$ is the fluctuation in $B^x_N$. For a single scattered photon, we obtain $\mathcal{F}^{(1)}_{av} \to 1/2$ for large $T_B$ as expected. Not capturing decoherence due to $B^z_N$ fluctuations,  Eqn.~\eqref{fidelav} represents an upper bound on the maximally achievable fidelity in the case of finite hh-lh mixing. To fully account for the effects of the stochastically varying net $B$-field vector, we show numerically obtained\footnote{The numerical calculations were performed using the Overhauser ensemble-averaged matrix operations defined in Sec.~\ref{app:fidel} of the Appendix.} fidelity overlaps of desired vs the ensemble-average of realised LC$_4$ states in Fig.~\ref{Fplots}.  In the presence of the Overhauser field with fluctuations $\sim 14$mT, near unit fidelity remains possible in the region with (moderately) strong $B \gtrsim 0.4$~T and relatively short $T_B \lesssim 0.25$~$\mu$s (Fig.~\ref{Fplots}a). Conversely, large LC generation rates demand $0.5~\mathrm{\mu s} \lesssim T_B \lesssim 1$~$\mu$s and $B \lesssim 0.1$~T (Fig.~\ref{Fplots}b), so that a trade-off situation arises. Encouragingly, there is a wide middle-ground where high fidelity operation is possible at respectable rates.
\begin{figure}[t!] 
\includegraphics[width=1\linewidth]{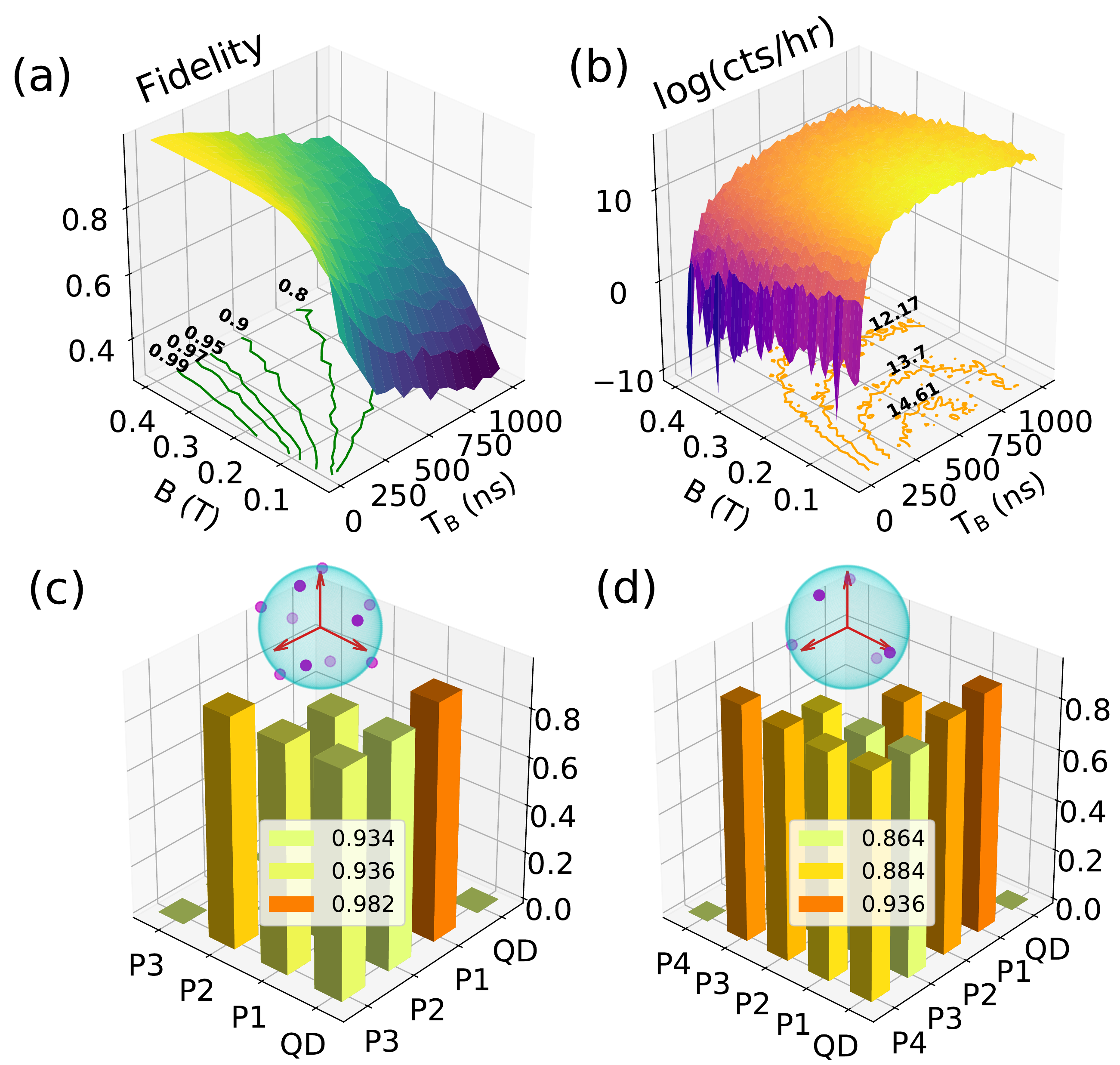}
\caption{\textbf{a)} Fidelity of the LC$_4$ state in the presence of the Overhauser field against applied field magnitude and single time-bin duration for a mixed hh--lh spin state. Overhauser fluctuations were 14~mT \cite{Malein2016}, with $g^x_h = 0.1$, $\alpha  = .01$ and a completely unpolarised spin bath. \textbf{b)} Natural logarithm of the success counts for a string of four photons. The overall detector efficiency was taken to be $\eta = 1$. The count rate increases with $T_B$ until probability of multiple events in a single bin becomes significant. An increasing B-field decreases the count rate as predicted from Eqn.~\eqref{gammapert}. {\textbf{c) and d)} Normalised LE for between pairwise combinations of a spin and 3 (panel c) or 4 (panel d) scattered photons, respectively. Due to computational constraints, we limited ourselves to ten (panel c) and five (panel d) uniformly distributed basis states on the Bloch sphere (with projectors shown in relevant insets).}} 
\label{Fplots}
\end{figure} 

Another important figure of merit of our protocol is the localisable entanglement (LE) \cite{Popp2005, Schwartz2016} between any two qubits of the LC state (including the spin). The LE represents the maximum negativity of the reduced density matrix of two qubits of interest (indexed $j$ and $k$), after all others have measured out projectively. Choosing the set of projectors $\mathcal{M} = \{ P_i : 1 \leq i \leq n,~i \notin \{j,k \} \}$ as our measurement defines an ensemble $\mathcal{E}_\mathcal{M} \coloneqq \{ p_{\mathcal{M},s}, \rho^{j,k}_{\mathcal{M},s} \}$, where $p_{\mathcal{M},s}$ is the probability of obtaining the two-spin density matrix $\rho^{j,k}_{\mathcal{M},s}$ for the outcome $\{s\}$ having measured the remaining $N-2$ qubits. The LE is then defined as the maximum negativity after averaging over all the outcomes for each measurement, that is
\begin{equation}
LE^\mathcal{N}_{j, k} = \max\limits_\mathcal{M} \sum_{s} p_{\mathcal{M},s}~\mathcal{N}(\rho^{j,k}_{\mathcal{M},s})~,
\label{eq:le}
\end{equation}
where $\mathcal{N}(\rho^{j,k}_{\mathcal{M},s})$ is the negativity of $\rho^{j,k}_{\mathcal{M},s}$. We choose a quasi-uniformly distributed basis on the Bloch sphere of each qubit  (see points in insets of Fig.~\ref{Fplots}c,d). The computational unwieldiness of Eqn.~(\ref{eq:le}) restricts the number of projectors, and we can only obtain a lower-bound of the true LE for LC$_{3,4}$ (Fig.~\ref{Fplots}c,d). Within the variance of the sample over which the optimisation was performed, the LE falls off with qubit distance, but encouragingly it remains remarkably high overall, and is thus unlikely to be a limiting factor in the length of the LC that could be generated using this protocol. 

\section{Overhauser field limitations}
\label{app:short}

The relatively short $T^*_2$ time of the electron spin due to the fluctuating nuclear environment constitutes a severe shortcoming of real quantum dot spins, putting a limit on the order of a few nanoseconds on any experiment relying on the coherence of this system. For the LR protocol \cite{Terry2009} one requires an external field of the order of $\sim50$mT along the Y direction in order to obtain a sufficient number of $Y$-gates for a multi photon LC$_{4 \geq n \geq 2}$ state within a few nanoseconds (assuming instantaneous excitation and radiative decay). Such an applied field, however, activates the previously dipole-forbidden transitions, degrading the correlations between the spin and emitted photons required for the LC state. Applying a strong field results in significant electron-spin precession between the pulsed excitation and spontaneous emission events, reducing the fidelity of the produced LC. By contrast, applying a weaker field limits the scalability of the protocol beyond a string of a couple of photons, as well as failing to screen the effects of the fluctuating Overhauser field. In short, the presence of the Overhauser field implies that the LR protocol would in practice need to be upgraded to incorporate dynamical decoupling and gated $Y$-rotations instead of relying on free spin precession.

One way to overcome some of these hurdles would be to adapt the LR protocol to a hole-spin system, having a longer dephasing time. However, due to the hole spins coupling weakly to external magnetic fields, the precession time would be much longer, requiring stronger fields to implement the $Y$-rotations, hence resulting in the same issue discussed above; namely, the undesirable dipole-forbidden transitions becoming accessible. Shorter coherence times in the presence of a weak external field and phonon sideband emissions (see below) would also be an issue in the hole-spin variant of the LR scheme. Hence, our scheme goes beyond a direct adaptation of the original LR scheme to the hole-spin platform, which would still suffer from most of the shortcomings of the original proposal.

Extending the promising dark exciton (DE) scheme\cite{Schwartz2016} beyond a couple of photons present similar experimental challenges: the finite radiative lifetime of the biexciton (BiE) $\tau_\mathrm{BiE} \approx 0.33$ns entails that the spin precesses by a non-negligible random amount both in the DE and BiE states, and this limits the purity of the photon polarisation state. Furthermore, the DE spin also suffers from environmental decoherence during its precession \cite{Schwartz2016}. It should be noted, however, that the dark exciton scheme proposed in Ref.~\citen{Schwartz2016} could be optimised (for example, by using Purcell enhancement) to improve scalability.

The elegant recently proposed scheme in Ref.~\citen{Denning2017} was designed to be robust against Overhauser fluctuations, provided the scattering events occur on a short enough timescale over which the Overhauser field can be assumed constant (so that only a global phase is gained in each trajectory). However, in this case an additional single photon source and high cooperativity is required, and any lifting of the selection rules (e.g. due to hole mixing, see below) will still impose practical limitations.

\section{Conclusion}

We have presented a novel scheme for generating frequency-encoded LC states, which could serve as a  stepping stone towards measurement-based quantum computation. Unlike current rival schemes, our protocol does not rely on the excitation and relaxation of the emitter, and is therefore only sensitive to ground-state hole-spin dephasing, at the cost of being limited by its intrinsic probabilistic nature. Based on experimentally informed properties of real epitaxial quantum dots, we have shown that LC states of sufficient length and high fidelity for fusion into larger cluster states can nevertheless be produced at respectable rates. In turn, this facilitates type-II fusing into 2D cluster states \cite{TerryFusion2005, Morley2017}. Our protocol takes full account of unmitigated Overhauser field fluctuations. It is inherently impervious to hole-mixing induced modifications of the optical selection rules, but, like other approaches, it stands to gain from dynamic decoupling. 

Whilst the probabilistic nature of the Raman scattering events limits our protocol as described in the main text to LC states of length $n < 10$, our approach can, in principle, be made deterministic. The most elegant way of achieving this would be to detect the presence of the Raman scattered photons without absorbing them or learning their frequency, however, this ability does not currently exist for optical photons, which is why we turn to observing the QD spin instead. Continuously monitoring whether a Raman spin-flip has happened, but without learning the spin state itself, requires the introduction of a secondary `ancilla'  quantum dot as a witness of the spin-photon entangling event. These extensions, discussed in more detail in Sec.~\ref{app:Deter} of the Appendix, make the Raman hole-spin emitter a viable, practical alternative in the quest for realising non-classical multi-photon states, and importantly one which can be straightforwardly implemented with current expertise and devices.

\begin{acknowledgments}

We thank David Gershoni, Emil Denning, and Jake Iles-Smith for insightful and stimulating discussions. D.S. thanks SUPA for financial support. We acknowledge support from EPSRC (EP/M013472/01) and the ERC (no. 307392).
B. D. G. thanks the Royal Society for a Wolfson Merit Award, and E. M. G. acknowledges support from the Royal Society of Edinburgh and the Scottish Government.

\end{acknowledgments}

\section*{Appendix}
\appendix

\section{Second-order perturbation rate}
\label{app:pertrate}

It can be easily shown that, after moving to a rotating frame with respect to the unperturbed transition frequency, the amplitude of the Raman-flip transition  $\Ket{\Da} \rightarrow \Ket{\Ua}$ is given by

\begin{align}\label{DUamp}
\begin{split}
\mathcal{T}_{\Da\rightarrow\Ua} = &\frac{\Bra{\Ua; \omega_{R}} H_I \Ket{T_\da; 0} \Bra{T_\da; 0} H_I \Ket{\Da; \omega_{Ray}}}{\hbar\Delta^{(1)}_1} \\
				  + & \frac{\Bra{\Ua; \omega_{Ray}} H_I \Ket{T_\ua; 0} \Bra{T_\ua; 0} H_I \Ket{\Da; \omega_{B}}}{\hbar\Delta^{(1)}_2} ~,
\end{split}
\end{align}
where $\Delta^{(1)}_1 = \delta_h + \delta_e$, $\Delta^{(1)}_2 = \delta_h - \delta_e$, $H_I$ is the light-matter interaction Hamiltonian (in this case between the spin and c.w. laser field), and  $\omega_R$, $\omega_B$ and $\omega_{Ray}$ are the red-, blue-detuned and Rayleigh scattered photon frequencies, respectively. The first term in Eqn.~\eqref{DUamp} gives the amplitude of a red Raman photon event: the system, initially in the $\Ket{\Da}$ state, scatters a $\sigma^V$ photon, after which the final state is given by $\Ket{\Ua; H}$ (that is, the system in the $\Ket{\Ua}$ state and a red-detuned Raman photon ($\sigma^H$ polarised) is scattered). Similarly, the $\Ket{\Ua} \rightarrow \Ket{\Da}$ transition giving rise to the blue-detuned photon scattering event occurs with amplitude

\begin{align}\label{UDamp}
\begin{split}
\mathcal{T}_{\Ua\rightarrow\Da} = &\frac{\Bra{\Da; \omega_{B}} H_I \Ket{T_\ua; 0} \Bra{T_\ua; 0} H_I \Ket{\Ua; \omega_{Ray}}}{\hbar\Delta^{(2)}_1} \\
				  + & \frac{\Bra{\Da; \omega_{Ray}} H_I \Ket{T_\da; 0} \Bra{T_\da; 0} H_I \Ket{\Ua; \omega_{R}}}{\hbar\Delta^{(2)}_2} ~,
\end{split}
\end{align}
where $\Delta^{(2)}_1 = -\delta_h - \delta_e$, $\Delta^{(2)}_2 = -\delta_h + \delta_e$.

The second term in each of the transition amplitudes does not contribute to the Raman processes, and vanish as the driving field can only drive vertically-polarised transitions. After performing the necessary solid angle integrals, we arrive at the scattering rate given by Eqn.~\eqref{gammapert} in the main text.

\section{Overhauser field for hole-spin systems}
\label{app:over}

Vanishing wavefunctions at the nuclear sites means that the Fermi-contact hyperfine term for the nuclear--hole spin interaction is effectively zero, leaving only the dipole-dipole interaction term as the dominant source of dephasing. For an idealised pure hh, this term is of Ising-nature, with just the ZZ component being present. In most epitaxially grown QDs, however, some degree of hh $\Ket{J; J_z} = \Ket{3/2; \pm 3/2}$ and lh $\Ket{J; J_z} = \Ket{3/2; \pm 1/2}$ mixing is always present \cite{Prechtel2016, Testelin2009}, breaking the Ising-like nature of the dipole-dipole term and introducing XX and YY terms in the Hamiltonian. This means that the eigenstates of the Hamiltonian are no longer given separately by the hh or lh states, but a linear combination of both (the consequences of this mixing in quantum dot-based LC protocols is further discussed in Sec.~\ref{app:mix} of the Appendix). Without going into too much detail, the hyperfine coupling Hamiltonian for the hh--lh states is given by:

\begin{equation}
H^{dd}_{hf} = V \sum_j C_j |\Psi(\mathbf{R}_j)|^2 \left[ \alpha (I^j_x S_x + I^j_y S_y) + I^j_z S_z \right]
\end{equation}
where $C_j$ are dipole-dipole hyperfine constants, $V$ is the unit cell volume, and $\alpha =  \frac{2}{\sqrt{3}}|\beta|$ is a parameter depending on the deformation potentials for the valence band, and the strain tensor \cite{Prechtel2016,Testelin2009}. In the `frozen-fluctuation' model \cite{Merkulov2002}, this results in an effective magnetic field with mean $\Braket{\mathbf{B}_N} = (\Braket{B^{x}_N}, \Braket{B^{y}_N}, \Braket{B^{z}_N})$ (which, due to the finite size of the spin bath, is not necessarily zero), and a fluctuation $\delta \mathbf{B}_N = (\delta B^{x}_N, \delta B^{y}_N, \delta B^{z}_N)$ (which is the source of the spin's loss of coherence), and is assumed to follow normal statistics\cite{Testelin2009}:

\begin{align}
\begin{split}
P(\mathbf{B}_N)  &= \left(\frac{1}{2 \pi}\right)^{\frac{3}{2}} \frac{1}{\delta B^{\| 2}_N \delta B^{\bot}_N} \\
\vspace{5mm}\\
& \times \exp\left[ - \frac{\Delta B^{x\;2}_N}{2 \: \delta B^{\| 2}_N} - \frac{\Delta B^{y\;2}_N}{2 \: \delta B^{\| 2}_N} - \frac{\Delta B^{z\;2}_N}{2 \: \delta B^{\bot 2}_N}  \right] ~,
\end{split}
\end{align}
where $\Delta B^{i}_N = B^{i}_N - \Braket{B^{i}_N}$, $\delta B^\bot_N = \delta B^z_N$ and $\delta B^\|_N \coloneqq \delta B^x_N = \delta B^y_N  = \alpha \: \delta B^\bot_N $. Experimentally, Overhauser field fluctuations of 10--30mT have been measured \cite{Chekhovich2013, Urbaszek2013}, putting a lower-bound on the applied external field required to screen these fluctuations.
%

\section{Matrix operations}
\label{app:matop}

Consider a single scattering process that can be described by the action of the product of matrices:

\begin{align}\label{scatmats}
\begin{split}
\Ket{\Ua}\Ket{Ray_k} \rightarrow \hspace{1mm} &\mathrm{e}^{-i \frac{\phi^{(k)}_1}{2}}  \mathrm{e}^{i \frac{\phi^{(k)}_2}{2}} \left(\Ket{\Ua} + \Ket{\Da} \right) \Ket{B_k} \\
=& U_r U_p(\phi^{(k)}_2) T^{(k)}_s U_p(\phi^{(k)}_1) \Ket{\Ua}\Ket{Ray_k} \\
=& Q^{(k)} \Ket{\Ua}\Ket{Ray_k} ~,\\
\vspace{5mm}\\
\Ket{\Da}\Ket{Ray_k} \rightarrow \hspace{1mm} &\mathrm{e}^{i \frac{\phi^{(k)}_1}{2}}  \mathrm{e}^{-i \frac{\phi^{(k)}_2}{2}} \left(\Ket{\Ua} - \Ket{\Da} \right) \Ket{R_k} \\
=& U_r U_p(\phi^{(k)}_2) T^{(k)}_s U_p(\phi^{(k)}_1) \Ket{\Da}\Ket{Ray_k} \\
=& Q^{(k)} \Ket{\Da}\Ket{Ray_k} ~,\\
\end{split}
\end{align}
where $U_p(\phi^{(k)}_{1,2})$ is the free spin precession transformation before ($\phi^{(k)}_1$) and after ($\phi^{(k)}_2$) the $k^\mathrm{th}$ scattering event (prior to the Y$_\frac{\pi}{2}$ rotation), with the resulting matrix of events being $Q^{(k)} \coloneqq U_r U_p(\phi^{(k)}_2) T^{(k)}_s U_p(\phi^{(k)}_1)$. The scattering matrix $T^{(k)}_s$ is given by

\begin{align}
\begin{split}
T^{(k)}_s &=  \left(\begin{array}{c|c}0 & T^{(k)}_{R} \\\hline T^{(k)}_{B} & 0\end{array}\right)~,
\end{split}
\end{align}
with $T^{(k)}_{R}$ and $T^{(k)}_{B}$ written in the basis $\{\Ket{B_k},\Ket{R_k},\Ket{Ray_k}\}$, which simultaneously flips the spin state $\Ket{\Ua} \leftrightarrow \Ket{\Da}$, and applies the local transformations 

\begin{align}
\begin{split}
T^{(k)}_{B} &: \Ket{Ray_k} \mapsto \Ket{B_k} \\
T^{(k)}_{R} &: \Ket{Ray_k} \mapsto \Ket{R_k}~,
\end{split}
\end{align}
where we have omitted the unaffected photon states for brevity. Hence $T^{(k)}_{B}$ and $T^{(k)}_{R}$ take the form:

\begin{align}
\begin{split}
T^{(k)}_{R} &= \mathbb{I}_3^{\bigotimes_{k-1}} \otimes \left(\begin{array}{ccc}0 & 0 & 0 \\0 & 0 & 1 \\0 & 0 & 0\end{array}\right) \otimes \mathbb{I}_3^{\bigotimes_{n-k}} \\
T^{(k)}_{B} &= \mathbb{I}_3^{\bigotimes_{k-1}} \otimes \left(\begin{array}{ccc}0 & 0 & 1 \\0 & 0 & 0 \\0 & 0 & 0\end{array}\right) \otimes \mathbb{I}_3^{\bigotimes_{n-k}}~,
\end{split}
\end{align}
and $U_r$ and $U_p(\phi)$ are simply given by given by

\begin{align}\label{matrices}
\begin{split}
U_r &= \exp\left(i \frac{\pi}{4} \sigma_y\right) \otimes \mathbb{I}_3^{\bigotimes_{n}} \\
U_p(\phi) &= \left(\begin{array}{cc}\mathrm{e}^{-i \frac{\phi}{2}} & 0 \\0 & \mathrm{e}^{i \frac{\phi}{2}}\end{array}\right) \otimes \mathbb{I}_3^{\bigotimes_{n}}~,
\end{split}
\end{align}
where the first matrices act on the spin state and have been written in the $\{ \Ket{\Ua},\Ket{\Da} \}$ basis. Unfortunately, the matrix product describing $n$-photon scattering events becomes unwieldy with increasing $n$. In Appendix Sec.~\ref{app:proof}, however, we show that this protocol does indeed generalise to a LC$_n$ state, up to free precession phases.

\section{Generalisation to $n$-photons}
\label{app:proof}

\subsection{Preliminary lemmas}

In this section, we will show that the general form of the $n$-photon state $S^{(n)}$ obtained using our protocol can be written recursively (where we have suppressed the ket representation for these states for ease of notation). In fact,

\begin{lemma}\label{lemma1}
$\forall n \in \mathbb{N}$, the $n$-photon state $S^{(n)}$ can be decomposed into the recursive relations

\begin{align}
\begin{split}
S^{(n)}_+ = S^{(n-1)}_+ \Ket{1_n} + S^{(n-1)}_- \Ket{0_n} ~, \\
S^{(n)}_- = S^{(n-1)}_+ \Ket{1_n} - S^{(n-1)}_- \Ket{0_n}~,
\end{split}
\end{align}
depending whether the spin is measured to be in the $\Ket{\Ua}$ or $\Ket{\Da}$ state, respectively.  
\end{lemma}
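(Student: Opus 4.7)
I would prove the lemma by induction on $n$, with the inductive hypothesis being that the joint spin-photon state after $n-1$ iterations of the elementary operation $Q$ has the form
\begin{equation*}
\Ket{\Psi^{(n-1)}} \;\propto\; \Ket{\Ua}\, S^{(n-1)}_+ \;+\; \Ket{\Da}\, S^{(n-1)}_-,
\end{equation*}
with $S^{(n-1)}_\pm$ purely photonic, so that projection of the spin onto $\Ket{\Ua}$ or $\Ket{\Da}$ yields $S^{(n-1)}_\pm$ as claimed. The base case $n=1$ (and, as a check, $n=2$) is verified directly by the step-by-step calculation already carried out in the main text, once one rewrites the result in the photon-qubit encoding $\Ket{B_k}=\Ket{1_k}$, $\Ket{R_k}=\Ket{0_k}$ and regroups by the spin basis.

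For the inductive step I would apply $Q^{(n)} = U_r\, U_p(\phi_2^{(n)})\, T_s^{(n)}\, U_p(\phi_1^{(n)})$, with these operators as defined in Sec.~\ref{app:matop}, to the inductive hypothesis. First, the two $U_p$'s are diagonal in the spin basis and act trivially on the photons; they multiply the $\Ket{\Ua}$ branch by $e^{-i\chi_n/2}$ and the $\Ket{\Da}$ branch by $e^{+i\chi_n/2}$, where $\chi_n=\phi_1^{(n)}-\phi_2^{(n)}$, and these scalar phases can be absorbed into a redefined $S^{(n-1)}_\pm$ (this is precisely the absorption the main text carries out after the $n=2$ calculation). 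Next, $T_s^{(n)}$ flips the spin and emits a frequency-correlated photon, $\Ket{\Ua}\mapsto\Ket{\Da}\Ket{B_n}$ and $\Ket{\Da}\mapsto\Ket{\Ua}\Ket{R_n}$. Finally $U_r=Y_{\pi/2}$ sends $\Ket{\Da}\mapsto\Ket{\Ua}+\Ket{\Da}$ and $\Ket{\Ua}\mapsto\Ket{\Ua}-\Ket{\Da}$ in the sign convention used in the protocol section.

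Composing these three actions on the inductive hypothesis and regrouping according to the final spin basis yields
\begin{equation*}
\Ket{\Ua}\bigl[S^{(n-1)}_+\Ket{1_n}+S^{(n-1)}_-\Ket{0_n}\bigr] \;+\; \Ket{\Da}\bigl[S^{(n-1)}_+\Ket{1_n}-S^{(n-1)}_-\Ket{0_n}\bigr],
\end{equation*}
so that reading off the coefficients of $\Ket{\Ua}$ and $\Ket{\Da}$ gives exactly the two claimed recursive relations for $S^{(n)}_\pm$.

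The only mildly delicate part is the phase book-keeping: I have to confirm that the accumulated precession phases always pair up as $\pm\chi_n/2$ between the two spin branches and hence factor out as a local unitary acting on the spin just before $T_s^{(n)}$ fires. This is the structural reason the lemma can legitimately suppress these phases, and it is the same manipulation already performed explicitly between the post-scattering state and Eq.~(5) of the main text. Once that is granted, the remainder is a direct unwinding of matrix products from Sec.~\ref{app:matop}, and no substantive obstacle arises.
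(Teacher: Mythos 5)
Your proposal is correct and follows essentially the same route as the paper's proof: induction on the number of scattering events, with the inductive hypothesis that the joint state has the form $\Ket{\Ua} S^{(n-1)}_+ + \Ket{\Da} S^{(n-1)}_-$, application of $T_s$ and $U_r$ (with the same sign conventions), and regrouping by the final spin basis. The only difference is that the paper sets the precession phases aside ``w.l.o.g.'' while you track them and note they factor out as a local unitary (equivalently, commuting the $U_p$'s through the spin-flipping $T_s$ collapses them to a single $U_p(\chi_n)$), which is a slightly more careful rendering of the same step.
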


\begin{proof}
We will, w.l.o.g., ignore the spin precession, although the proof is the same for the general case:

\vspace{5mm}
\textit{Basis case}: For $j=1$, $S^{(1)}_+ = \Ket{1_1} + \Ket{0_1}$ and $S^{(1)}_- = \Ket{1_1} - \Ket{0_1}$. After the next scattering event, we get

\begin{align}
\begin{split}
S^{(2)}_+ &= \Ket{1_1 1_2} + \Ket{1_1 0_2} + \Ket{0_1 1_2} - \Ket{0_1 0_2} \\
&= (\Ket{1_1} + \Ket{0_1})\Ket{1_2} + (\Ket{1_1} - \Ket{0_1})\Ket{0_2} \\
&= S^{(1)}_+ \Ket{1_2} + S^{(1)}_- \Ket{0_2} ~.
\end{split}
\end{align}
Similarly,

\begin{align}
\begin{split}
S^{(2)}_- &= \Ket{1_1 1_2} - \Ket{1_1 0_2} + \Ket{0_1 1_2} + \Ket{0_1 0_2} \\
&= (\Ket{1_1} + \Ket{0_1})\Ket{1_2} - (\Ket{1_1} - \Ket{0_1})\Ket{0_2} \\
&= S^{(1)}_+ \Ket{1_2} - S^{(1)}_- \Ket{0_2} ~.
\end{split}
\end{align}

\vspace{5mm}
\textit{Induction step}: Assume statement holds for $j=n$, and consider the $(n+1)^\mathrm{th}$ scattering event:

\begin{align}
\begin{split}
S^{(n+1)}_+ = &~U_r T^{(n+1)}_{scat} ( \Ket{\Ua}  S^{(n)}_+ + \Ket{\Da} S^{(n)}_-  ) \Ket{Ray_{n+1}} \\
                    = &~(\Ket{\Ua} + \Ket{\Da}) S^{(n)}_+ \Ket{1_{n+1}} + (\Ket{\Ua} - \Ket{\Da}) S^{(n)}_- \Ket{0_{n+1}} \\
                    = &~\Ket{\Ua} (S^{(n)}_+ \Ket{1_{n+1}} + S^{(n)}_- \Ket{0_{n+1}}) \\
                       &~+ \Ket{\Da} (S^{(n)}_+ \Ket{1_{n+1}} - S^{(n)}_- \Ket{0_{n+1}})~.
\end{split}
\end{align}
Therefore  $S^{(n+1)}_+ = S^{(n)}_+ \Ket{1_{n+1}} + S^{(n)}_- \Ket{0_{n+1}}$ and $S^{(n+1)}_- = S^{(n)}_+ \Ket{1_{n+1}} - S^{(n)}_- \Ket{0_{n+1}}$, so the statement holds $\forall n \in \mathbb{N}$.
\end{proof}

It is then easy to see that we also have that

\begin{lemma}\label{lemma2}
\begin{equation}
\sigma^{(n)}_z S^{(n)}_\pm = -S^{(n)}_\mp \quad \forall n \in \mathbb{N}~,
\end{equation}
\end{lemma}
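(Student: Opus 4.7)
My plan is to treat Lemma~\ref{lemma2} as an almost immediate corollary of Lemma~\ref{lemma1}, since the recursive decomposition of $S^{(n)}_\pm$ already separates the $n$-th photon cleanly from the rest of the state. Explicitly, Lemma~\ref{lemma1} gives
\begin{equation}
S^{(n)}_\pm = S^{(n-1)}_+ \Ket{1_n} \pm S^{(n-1)}_- \Ket{0_n},
\end{equation}
so once I know how $\sigma^{(n)}_z$ acts on $\Ket{0_n}$ and $\Ket{1_n}$ the result should follow by a single line of arithmetic.

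The first step is therefore to fix conventions and observe that, with the photon encoding $\Ket{B_i}=\Ket{1_i}$, $\Ket{R_i}=\Ket{0_i}$ used in the main text, $\sigma^{(n)}_z$ acts trivially on the spin and on photons $1,\dots,n-1$, while on the $n$-th photon it satisfies $\sigma_z\Ket{0_n}=\Ket{0_n}$ and $\sigma_z\Ket{1_n}=-\Ket{1_n}$. This commutation with everything to the left of the $n$-th tensor slot is what makes the argument painless: I can push $\sigma^{(n)}_z$ through the factors $S^{(n-1)}_\pm$ without picking up any signs.

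Second, I substitute and compute:
\begin{align}
\sigma^{(n)}_z S^{(n)}_\pm &= S^{(n-1)}_+ \bigl(\sigma_z\Ket{1_n}\bigr) \pm S^{(n-1)}_- \bigl(\sigma_z\Ket{0_n}\bigr) \\
&= -S^{(n-1)}_+ \Ket{1_n} \pm S^{(n-1)}_- \Ket{0_n} \\
&= -\bigl(S^{(n-1)}_+ \Ket{1_n} \mp S^{(n-1)}_- \Ket{0_n}\bigr) = -S^{(n)}_\mp,
\end{align}
which is exactly the claim. No induction is needed beyond invoking Lemma~\ref{lemma1}, because that lemma has already carried out the induction for us.

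The only mild subtlety, and the step I would be careful to spell out, is justifying that the free-precession phases (which I have suppressed in writing $S^{(n)}_\pm$, following the convention adopted in Lemma~\ref{lemma1}) do not interfere with the argument: they live on the spin factor and on no photon factor at all, so they commute with $\sigma^{(n)}_z$ and ride along unchanged on both sides of the identity. Thus there is no real obstacle — the recursive structure established earlier does all of the work, and Lemma~\ref{lemma2} amounts to reading off a sign from the action of $\sigma_z$ on the computational basis of the last photon qubit.
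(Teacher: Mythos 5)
Your proof is correct and matches the paper's (implicit) treatment: the paper states Lemma~\ref{lemma2} without proof as an easy consequence of Lemma~\ref{lemma1}, and your one-line computation using $\sigma_z\Ket{1_n}=-\Ket{1_n}$, $\sigma_z\Ket{0_n}=\Ket{0_n}$ on the last tensor slot of the recursion is exactly that argument (with $n=1$ checked directly from $S^{(1)}_\pm=\Ket{1_1}\pm\Ket{0_1}$). One small quibble: the suppressed precession phases attach to the photon colour components (e.g.\ $\mathrm{e}^{\mp i\chi_1/2}$ on the $B_1$/$R_1$ terms), not to the spin factor alone as you state, but since they act diagonally in the $\{\Ket{0},\Ket{1}\}$ basis they commute with $\sigma^{(n)}_z$ and your conclusion is unaffected.
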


which we shall use to prove that the $n$-photon state we generate is indeed a linear cluster state.

\subsection{Equivalence to LC$_n$ states}

In order to show that the $S^{(n)}_\pm$ states are indeed LC$_n$s, we have to show that they both satisfy the set of eigenvalue equations

\begin{equation}\label{conditions}
K^{(a)}_n S^{(n)}_\pm = (-1)^{k^{(a)}_\pm} S^{(n)}_\pm~,
\end{equation}
with

\begin{equation}
K^{(a)}_n = \sigma^{(a)}_x \bigotimes_{b \in N(a)} \sigma^{(b)}_z~,
\end{equation}
where $1 \leq a \leq n $, $N(a)$ is the set of direct neighbours of photon $a$ along the state, and $k^{(a)}_\pm \in \{ 0, 1\}$, depending on the particular realisation of LC$_n$. The subscript on the operator $K$ denotes the state tensor-length of $K$, and hence the length of the state it acts upon. In fact we shall show the following statement

\begin{theorem}
The $n$-photon $S^{(n)}$ state satisfies the set of LC$_n$-eigenvalue equations for 

\begin{align}\label{params}
\begin{split}
k^{(a)}_+  &= 
\begin{cases}
1, & \text{if}\ a \in \{ 1, n\} \\
0, & \text{if}\ 1 < a < n
\end{cases} \\
\vspace{5mm}
k^{(a)}_-  &= 
\begin{cases}
1, & \text{if}\ a = 1 \\
0, & \text{if}\ 1 < a \leq n
\end{cases}
\end{split}
\end{align}

\end{theorem}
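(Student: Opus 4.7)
The natural strategy is induction on $n$, using the recursion for $S^{(n)}_\pm$ from Lemma~\ref{lemma1} together with the sign-flip identity of Lemma~\ref{lemma2}. I would take $n=2$ as the base case, directly verifying the four eigenvalue equations against the explicit forms of $S^{(2)}_\pm$ computed in the proof of Lemma~\ref{lemma1}; this is a short computation since there are only two non-trivial stabilizers $K^{(1)}_2 = X_1 Z_2$ and $K^{(2)}_2 = Z_1 X_2$. Assuming the statement holds for length $n-1$, I would then pass to the $n$-cluster and analyse $K^{(a)}_n S^{(n)}_\pm$ separately in three cases according to where the stabilizer index $a$ sits relative to the newly appended photon.

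\textbf{Case (i)}, $1 \leq a \leq n-2$: here $K^{(a)}_n = K^{(a)}_{n-1} \otimes \mathbb{I}_n$ acts trivially on photon $n$, so it distributes through the Lemma~\ref{lemma1} recursion and the conclusion is immediate from the inductive hypothesis, using that such indices retain the same ``left-boundary'' ($a=1$) or ``interior'' ($1<a<n-1$) status in both clusters. \textbf{Case (ii)}, $a = n-1$: now $K^{(n-1)}_n = K^{(n-1)}_{n-1} \otimes Z_n$, so the $Z_n$ supplies an extra minus sign on the $\Ket{1_n}$-branch of $S^{(n)}_\pm$. By the IH, $K^{(n-1)}_{n-1}$ is the \emph{last} stabilizer of the $(n-1)$-cluster and therefore acts as $-1$ on $S^{(n-1)}_+$ but $+1$ on $S^{(n-1)}_-$; these two distinct signs cancel precisely against the $Z_n$ contribution, yielding a common $+1$ eigenvalue on both $S^{(n)}_\pm$, as required by $k^{(n-1)}_\pm = 0$. \textbf{Case (iii)}, $a = n$: the new stabilizer $K^{(n)}_n = Z_{n-1} X_n$ swaps $\Ket{0_n} \leftrightarrow \Ket{1_n}$ via $X_n$, while $Z_{n-1}$ acts on the last photon of the $S^{(n-1)}$ block; Lemma~\ref{lemma2} then sends $S^{(n-1)}_\pm \mapsto -S^{(n-1)}_\mp$, and a short rearrangement gives $K^{(n)}_n S^{(n)}_+ = -S^{(n)}_+$ and $K^{(n)}_n S^{(n)}_- = +S^{(n)}_-$, matching the prescribed $k^{(n)}_+ = 1$ and $k^{(n)}_- = 0$.

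The only genuinely subtle step is Case (ii), because the index $a = n-1$ changes role from the ``last'' position in the $(n-1)$-cluster to an ``interior'' position in the $n$-cluster, and the IH therefore supplies \emph{different} eigenvalues on the two branches $S^{(n-1)}_\pm$. Checking that the new $Z_n$ factor flips exactly the right one of these signs so that both branches emerge as $+1$ eigenstates is the piece of bookkeeping where a sign error is most likely; everything else is a mechanical application of Lemmas~\ref{lemma1} and~\ref{lemma2}. The free-precession phases acquired between scattering events factor as branch-wise global phases throughout the argument and so leave all stabilizer eigenvalues untouched, consistent with the LU-equivalence remark made in the main text.
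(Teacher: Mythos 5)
Your proposal is correct and follows essentially the same route as the paper's own proof: induction on the photon number using the recursion of Lemma~\ref{lemma1}, a case split on where the stabilizer index $a$ sits relative to the appended photon, and Lemma~\ref{lemma2} for the new end stabilizer, with your case (ii) sign cancellation being exactly the paper's $a=n$ step. The only differences are cosmetic (you merge the $a=1$ and interior cases, and index the induction as $n-1 \to n$ rather than $n \to n+1$), and your identification of the role change of the second-to-last index as the delicate point matches the paper's handling.
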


\begin{proof}
The proof follows, once again, by induction, as well as the use of Lemma~\ref{lemma1}

\vspace{5mm}
\textit{Basis case}: For $j=2$, .

\begin{align}
\begin{split}
S^{(2)}_+ = (\Ket{1_1}+\Ket{0_2}) \Ket{1_n} + (\Ket{1_1}-\Ket{0_1}) \Ket{0_2} ~, \\
S^{(2)}_+ = (\Ket{1_1}+\Ket{0_2}) \Ket{1_n} - (\Ket{1_1}-\Ket{0_1}) \Ket{0_2} ~, \\
\end{split}
\end{align}
and the statement holds when applying $\sigma^{(1)}_x \otimes \sigma^{(2)}_z$ and $\sigma^{(1)}_z \otimes \sigma^{(2)}_x$.

\vspace{5mm}
\textit{Induction step}: Suppose the statement holds for $j=n$, and consider $S^{(n+1)}_+ = S^{(n)}_+ \Ket{1_{n+1}} + S^{(n)}_- \Ket{0_{n+1}}$.Then
\vspace{5mm}

\textit{If $a=1$}:

\begin{align}
\begin{split}
K^{(a)}_{n+1} S^{(n+1)}_+ &= (K^{(a)}_{n} \otimes \mathbb{I}_2) (S^{(n)}_+ \Ket{1_{n+1}} + S^{(n)}_- \Ket{0_{n+1}}) \\
                                           &=  (-1)^{k^{(1)}_+} S^{(n)}_+ \Ket{1_{n+1}} + (-1)^{k^{(1)}_-} S^{(n)}_- \Ket{0_{n+1}} \\
                                           &= - (S^{(n)}_+ \Ket{1_{n+1}} + S^{(n)}_- \Ket{0_{n+1}}) \\
                                           &= - S^{(n+1)}_+~,
\end{split}
\end{align}
with $\mathbb{I}_2$ being the $2 \times 2$ identity matrix. The penultimate step holds due the induction hypothesis. Similarly, for $S^{(n+1)}_-$,

\begin{align}
\begin{split}
K^{(a)}_{n+1} S^{(n+1)}_- &= (K^{(a)}_{n} \otimes \mathbb{I}_2) (S^{(n)}_+ \Ket{1_{n+1}} - S^{(n)}_- \Ket{0_{n+1}}) \\
                                           &=  (-1)^{k^{(1)}_+} S^{(n)}_- \Ket{1_{n+1}} - (-1)^{k^{(1)}_-} S^{(n)}_- \Ket{0_{n+1}} \\
                                           &= - (S^{(n)}_+ \Ket{1_{n+1}} - S^{(n)}_- \Ket{0_{n+1}}) \\
                                           &= - S^{(n+1)}_-~.
\end{split}
\end{align}

\textit{If $1 < a < n$}:

\begin{align}
\begin{split}
K^{(a)}_{n+1} S^{(n+1)}_+ &= (K^{(a)}_{n} \otimes \mathbb{I}_2) (S^{(n)}_+ \Ket{1_{n+1}} + S^{(n)}_- \Ket{0_{n+1}}) \\
                                           &=  (-1)^{k^{(a)}_+} S^{(n)}_+ \Ket{1_{n+1}} + (-1)^{k^{(a)}_-} S^{(n)}_- \Ket{0_{n+1}} \\
                                           &= S^{(n)}_+ \Ket{1_{n+1}} + S^{(n)}_- \Ket{0_{n+1}} \\
                                           &= S^{(n+1)}_+~,
\end{split}
\end{align}

\begin{align}
\begin{split}
K^{(a)}_{n+1} S^{(n+1)}_- &= (K^{(a)}_{n} \otimes \mathbb{I}_2) (S^{(n)}_+ \Ket{1_{n+1}} - S^{(n)}_- \Ket{0_{n+1}}) \\
                                           &=  (-1)^{k^{(a)}_+} S^{(n)}_- \Ket{1_{n+1}} - (-1)^{k^{(a)}_-} S^{(n)}_- \Ket{0_{n+1}} \\
                                           &= S^{(n)}_+ \Ket{1_{n+1}} - S^{(n)}_- \Ket{0_{n+1}} \\
                                           &= S^{(n+1)}_-~.
\end{split}
\end{align}

\textit{If $a = n$}:

\begin{align}
\begin{split}
K^{(a)}_{n+1} S^{(n+1)}_+ &= (K^{(a)}_{n} \otimes \sigma^{(n+1)}_z) (S^{(n)}_+ \Ket{1_{n+1}} + S^{(n)}_- \Ket{0_{n+1}}) \\
                                           &=  -(-1)^{k^{(n)}_+} S^{(n)}_+ \Ket{1_{n+1}} + (-1)^{k^{(n)}_-} S^{(n)}_- \Ket{0_{n+1}} \\
                                           &= S^{(n)}_+ \Ket{1_{n+1}} + S^{(n)}_- \Ket{0_{n+1}} \\
                                           &= S^{(n+1)}_+~,
\end{split}
\end{align}

\begin{align}
\begin{split}
K^{(a)}_{n+1} S^{(n+1)}_- &= (K^{(a)}_{n} \otimes \sigma^{(n+1)}_z) (S^{(n)}_+ \Ket{1_{n+1}} - S^{(n)}_- \Ket{0_{n+1}}) \\
                                           &=  - (-1)^{k^{(n)}_+} S^{(n)}_- \Ket{1_{n+1}} - (-1)^{k^{(n)}_-} S^{(n)}_- \Ket{0_{n+1}} \\
                                           &= S^{(n)}_+ \Ket{1_{n+1}} - S^{(n)}_- \Ket{0_{n+1}} \\
                                           &= S^{(n+1)}_-~.
\end{split}
\end{align}

For the $a = n+1$ case, we shall make use of Lemma~\ref{lemma2}. The operator $K^{(n+1)}_{n+1}$ can be decomposed as $\mathbb{I}_2^{\bigotimes_{n-1}} \otimes \sigma^{(n)}_z \otimes \sigma^{(n+1)}_x$, and hence we get that

\vspace{5mm}
\textit{If $a = n+1$}:

\begin{align}
\begin{split}
K^{(a)}_{n+1} S^{(n+1)}_+ &= - S^{(n)}_- \Ket{0_{n+1}} - S^{(n)}_+ \Ket{1_{n+1}} \\
                                           &= - S^{(n+1)}_+~,
\end{split}
\end{align}

\begin{align}
\begin{split}
K^{(a)}_{n+1} S^{(n+1)}_-  &=  -S^{(n)}_- \Ket{0_{n+1}} + S^{(n)}_+ \Ket{1_{n+1}} \\
                                           &= S^{(n+1)}_-~.
\end{split}
\end{align}    
Therefore, the states $S^{(n)}_\pm$ satisfy the eigenvalue conditions \eqref{conditions} for the set of parameters \eqref{params}, meaning that the the state obtained by our protocol is an LC$_n$ state.
\end{proof}

\section{Average fidelity}
\label{app:fidel}

Consider a single scattering event in which the spin precesses for a time $T^{(1)}_B$ prior to the scattering event and a subsequent precession time $T^{(2)}_B$ followed by a Y rotation marking the end of the run (such that $T^{(1)}_B + T^{(2)}_B = T_B$). In the presence of the $B^x_N$ component, the rotation matrix $U_p(\phi)$ in \eqref{matrices} picks up a stochastic term $\omega_N t$, that is

\begin{equation}
U_p((\omega_B + \omega_N) t) = \left(\begin{array}{cc}\mathrm{e}^{-i \frac{1}{2}(\omega_B + \omega_N) t} & 0 \\0 & \mathrm{e}^{i \frac{1}{2}(\omega_B + \omega_N) t}\end{array}\right) \otimes \mathbb{I}_3^{\bigotimes_{n}}~,
\end{equation}
with $t = T^{(1)}_B$ or $T^{(2)}_B$, where we have written the precessed angle explicitly in terms of $\omega_B = g^x_h \mu_B B_{ext} / \hbar$ and the Overhauser stochastic frequency $\omega_N = g^x_h \mu_B B^x_N / \hbar$ ($g_h^x$ being the $x$ component of the anisotropic hole g-factor\footnote{The anisotropy in the hole g-factor is, in general, not the same as the effective anisotropy in the g-tensor for the hole Overhauser shift due to hlh--lh mixing.}). 

The effect of this stochastic term can be seen in the trace fidelity between post Y rotation ideal photon state, and the more realistic case including the Overhauser field. The spin+photon states for the two cases, denoted by $S^{(1)}$ and $\tilde{S}^{(1)}$, respectively, are then given by

\begin{align}
\begin{split}
S^{(1)}  = &\mathrm{e}^{-i \frac{1}{2}\omega_B  \delta T_B}\Ket{\Ua B_1} +\mathrm{e}^{-i \frac{1}{2}\omega_B  \delta T_B}\Ket{\Da B_1} \\
             + &\mathrm{e}^{i \frac{1}{2}\omega_B  \delta T_B}\Ket{\Ua R_1} - \mathrm{e}^{i \frac{1}{2}\omega_B  \delta T_B}\Ket{\Da R_1}~, \\
\vspace{5mm}\\
\tilde{S}^{(1)}  = &\mathrm{e}^{-i \frac{1}{2}(\omega_B + \omega_N) \delta T_B}\Ket{\Ua B_1} +\mathrm{e}^{-i \frac{1}{2}(\omega_B + \omega_N) \delta T_B}\Ket{\Da B_1} \\
                             + &\mathrm{e}^{i \frac{1}{2}(\omega_B + \omega_N) \delta T_B}\Ket{\Ua R_1} - \mathrm{e}^{i \frac{1}{2}(\omega_B + \omega_N) \delta T_B}\Ket{\Da R_1}~,
\end{split}
\end{align}
where $ \delta T_B = T^{(1)}_B - T^{(2)}_B \in [-T_B,T_B]$ is a uniform random variable due to the fact that the spin precesses multiple times during $T_B$ in the high external magnetic field. The final photon state, as discussed earlier, depends on the state the spin is measured in, so we shall denote the density matrices of the ideal and realistic cases by $\rho^{(1)}_+$ and $\xi^{(1)}_+$, respectively, if the spin is measured in the $\Ket{\Ua}$ state, and similarly $\rho^{(1)}_-$ and $\xi^{(1)}_-$ for the $\Ket{\Da}$ result. The fidelity for a fixed value of $B^x_N$ is then given by $ \mathcal{F}^{(1)} = \mathrm{tr}(\rho^{(1)}_+ \xi^{(1)}_+) = \mathrm{tr}(\rho^{(1)}_- \xi^{(1)}_-) = \cos^2( B^x_N \delta T_B / 2 )$. 

Due to the stochastic nature of the Overhauser field, we need to ensemble-average $\mathcal{F}^{(1)}$ in order to get the true fidelity, that is $\bar{\mathcal{F}}^{(1)} = \Braket{\Braket{\mathrm{tr}(\rho^{(1)}_- \xi^{(1)}_-)}_B}_{\delta T} = \Braket{\Braket{\mathrm{tr}(\rho^{(1)}_+ \xi^{(1)}_+)}_B}_{\delta T}$, where the Overhauser averaging $\Braket{\cdot}_B$ and time averaging $\Braket{\cdot}_{\delta T}$ are performed over a normal distribution with zero mean and finite standard deviation $\delta B^x_N$, and a uniform distribution over $[-T_B,T_B]$ \footnote{the averages are performed independently due to the statistical independence of $B^x_N$ and $\delta T$}. In doing so, we get the averaged fidelity for a single scattering event in the presence of $B^x_N$ given by Eqn.~\ref{fidelav}.

\section{Imperfections of other QD-based protocols}
\label{app:compare}

As discussed in the main text, several protocols have been proposed for implementing photonic LC states or entangled states sharing similar properties. The influential 2009 proposal by Lindner and Rudolph\cite{Terry2009} (LR) offered an elegant and simple scheme which could be implemented using the circularly polarised degrees of freedom of a quantum dot. Despite its simplicity, a number of experimental barriers need to be overcome to actually implement such a scheme. The Overhauser fluctuation limitations have already been discussed in the main text; below we discuss some additional constraints both for the LR scheme as well as the recent dark exciton (DE) based LC scheme\cite{Schwartz2016}, which has already successfully produced LC$_2$ states in the laboratory and shown promise for reaching up to LC$_5$. In essence, these imperfections effectively introduce limits to the size of achievable cluster states for those protocols, hence limiting the indefinite deterministic operation in the absence of further optimisations. By contrast, we note that our approach in this work -- as discussed in the main paper -- is largely immune against all issues discussed below.

\subsection{Shortcomings due to coupling to phonons}

The solid-state environment further limits the deterministic nature of these protocols due to coupling to the phonon environment. Even in the limit of idealised instantaneous excitation pulses, a temperature-dependent fraction of the photons are inevitably emitted incoherently via the phonon sideband ($\sim 9\%$ at temperatures as low as $T=4$K, increasing with temperature \cite{Smith2017}). This affects all protocols involving electronic excitation to trion of biexciton states, i.e. both the LR and DE approaches.

\subsection{Effects of hole state mixing}
\label{app:mix}

In this section, we discuss how said protocols fare against finite hh-lh mixing\footnote{Admixture of conduction band states even in the absence of a lh contribution may result in a non-Ising type hyperfine Hamiltonian for the hh system \cite{Prechtel2016}, however, this goes beyond the scope of this work}. The first type of hh-lh mixing, due to anisotropy in the in-plane strain of the quantum dot, gives rise to the hh$\ua$-lh$\da$ mixing, resulting in the hole eigenstates 
\begin{align}\label{ud}
\begin{split}
\Ket{\Ua} = \frac{1}{\sqrt{1+|\beta_{ud}|^2}}(\Ket{3/2;+3/2} + \beta_{ud}\Ket{3/2;-1/2}) ~, \\
\Ket{\Da} = \frac{1}{\sqrt{1+|\beta_{ud}|^2}}(\Ket{3/2;-3/2} + \beta^*_{ud}\Ket{3/2;+1/2})~,
\end{split}
\end{align}
where, without giving its explicit form, $\beta_{ud}$ is the in-plane strain-dependent mixing factor \cite{Testelin2009, Prechtel2016}. This type of mixing primarily causes ellipticity of the dipole-allowed transitions which, for a hh system, would be driven by $\sigma^\pm$ polarized light. Hence this hh$\ua$-lh$\da$ mixing does not induce the `diagonal' dipole-forbidden transitions. 

On the other hand, the hh$\ua$-lh$\ua$ mixing may allow transitions which would otherwise be forbidden for a hh system. The hole eigenstates solely due this type of mixing are given by:
\begin{align}\label{uu}
\begin{split}
\Ket{\Ua} = \frac{1}{\sqrt{1+|\beta_{uu}|^2}}(\Ket{3/2;+3/2} + \beta_{uu}\Ket{3/2;+1/2}) ~, \\
\Ket{\Da} = \frac{1}{\sqrt{1+|\beta_{uu}|^2}}(\Ket{3/2;-3/2} + \beta^*_{uu}\Ket{3/2;-1/2})~,
\end{split}
\end{align}
where $\beta_{uu}$ is the hh$\ua$-lh$\ua$ admixture factor \cite{Testelin2009, Prechtel2016}. From Eqns.~\eqref{uu}, it can be immediately seen that the transitions, which are forbidden in Faraday geometry, are now allowed. For hole-spins, $\beta_{uu}$ has been measured to be $\sim 8\%$, leading to allowed-to-forbidden transition ratios of $|\beta_{uu}|^2 / 3 \approx 0.2\%$ \cite{Prechtel2016}, although this varies from one quantum dot to another. This means that even if the external field in the LR scheme is weak enough to preserve a pure Faraday geometry, dipole-forbidden transitions may still occur with some small, but finite probability, both for the original and the hole-spin variant of the LR protocol.

Similarly, in the DE system $z$-polarised `forbidden' transitions are also present due to hole sub-band mixing, although these transitions in this system are significantly weaker \cite{GershoniAtomistic, Dupertuis2011}. In addition to hh-lh mixing, the DE scheme also suffers from dark-bright exciton (DE-BE) state mixing due to the breaking of the C$_{2v}$ symmetry, although this effect is much weaker than the hh-lh mixing. Realistically, self-assembled QDs suffer from a reduction in symmetry during the growth process, causing a departure from the ideal C$_{2v}$ symmetry. The resulting `reduced' C$_s$ symmetry leads to DE-BE state couplings of two kinds; the first leads to finite $z$-polarised dipole transitions similar to the hh$\ua$-lh$\da$ admixture in the BE schemes, whilst the second gives rise to forbidden in-plane transitions, bearing similar repercussions as the hh$\ua$-lh$\ua$ mixing discussed above \cite{GershoniAtomistic, GershoniPhenom}, although to a much lesser extent.

We note that our approach does not suffer from modifications of the selections rules due to hole mixing: we already rely on the presence of off-diagonal transitions and slight changes to their rates will not make an appreciable difference.
\begin{figure}[t!]
\centering
\includegraphics[width=.8\linewidth]{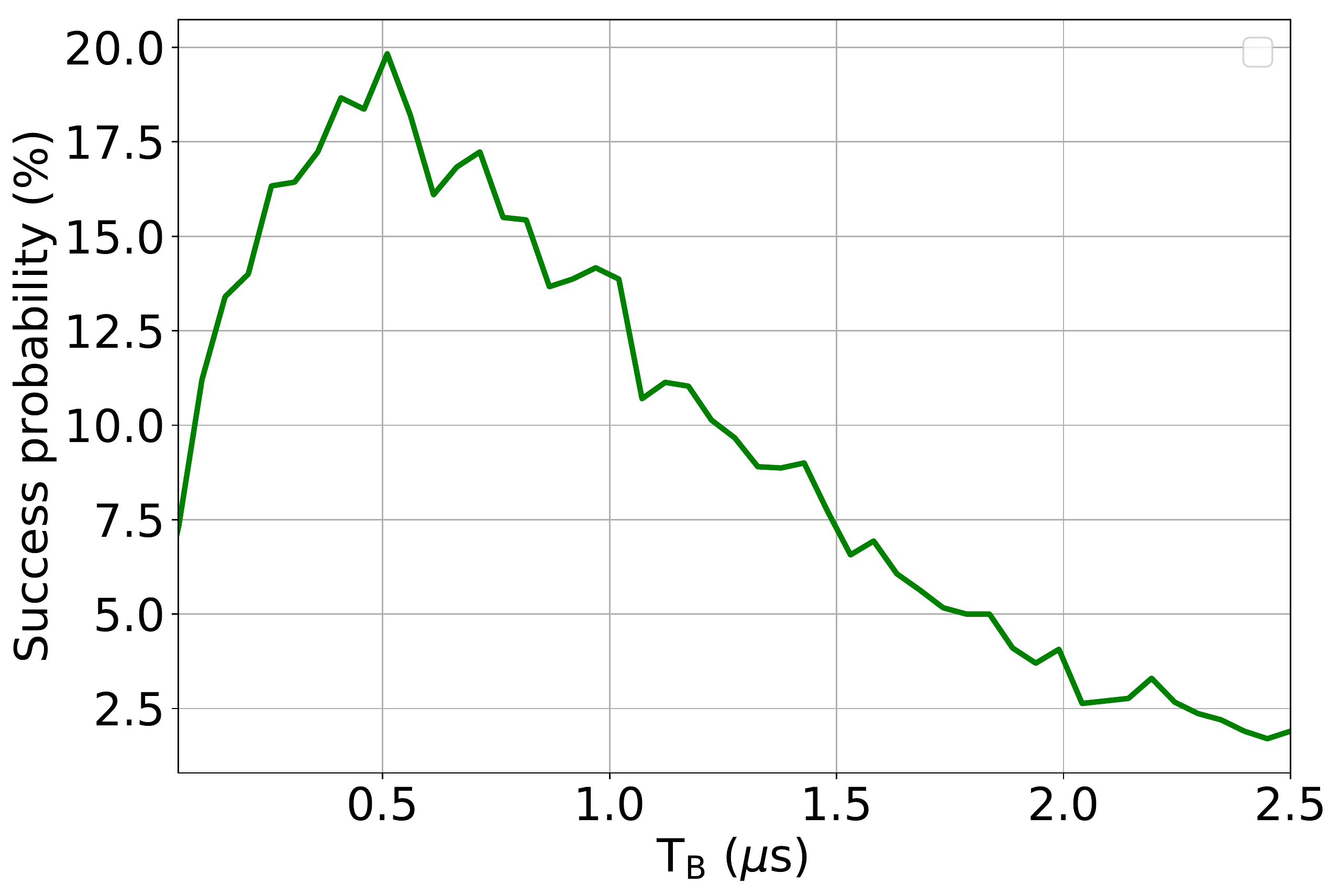} 
\caption{Success probability of scattering a single Raman photon using a c.w. source against time-bin length $T_B$. At an optimised time-bin length $T_B \approx 0.5\mu$s, the probability can be as high as 20\%, before it drops once more due to the probabilities of getting multiple photons in a single time-bin.}
\label{prob}
\end{figure}

\subsection{Pulsed scheme limitations}

As mentioned earlier, the main limitation of our scheme is the unknown time-of-arrival of the photons due to the c.w. source. An obvious solution might be using a pulsed source for the photons. Despite addressing the issue of the photons' unknown phases, such a protocol would still not be deterministic, as there is still a 50\% probability that a Rayleigh scattering event, instead of a spin-flipping Raman one, occurs. Whilst this is still a considerable improvement over the $\sim$ 20\% we get for an optimised time-bin length (Fig.~\ref{prob}), this pulsed-excitation scheme would not benefit from the advantages of sub-saturation driving; mainly the photon linewidth limited only by the hole spin coherence and laser linewidth, and be susceptible to phonon dephasing. Hence, the opportunity to create longer LCs with less probabilistic phase uncertainty comes at the price of lower quality LC states, which we argue is paramount for reliably constructing 2D cluster states required for quantum computation using probabilistic fusion gates.

\section{Robustness of 2D cluster state protocols and LC state fusing schemes}
\label{app:2d}
\begin{figure}[t!]
\flushleft
\includegraphics[width=1.01\linewidth]{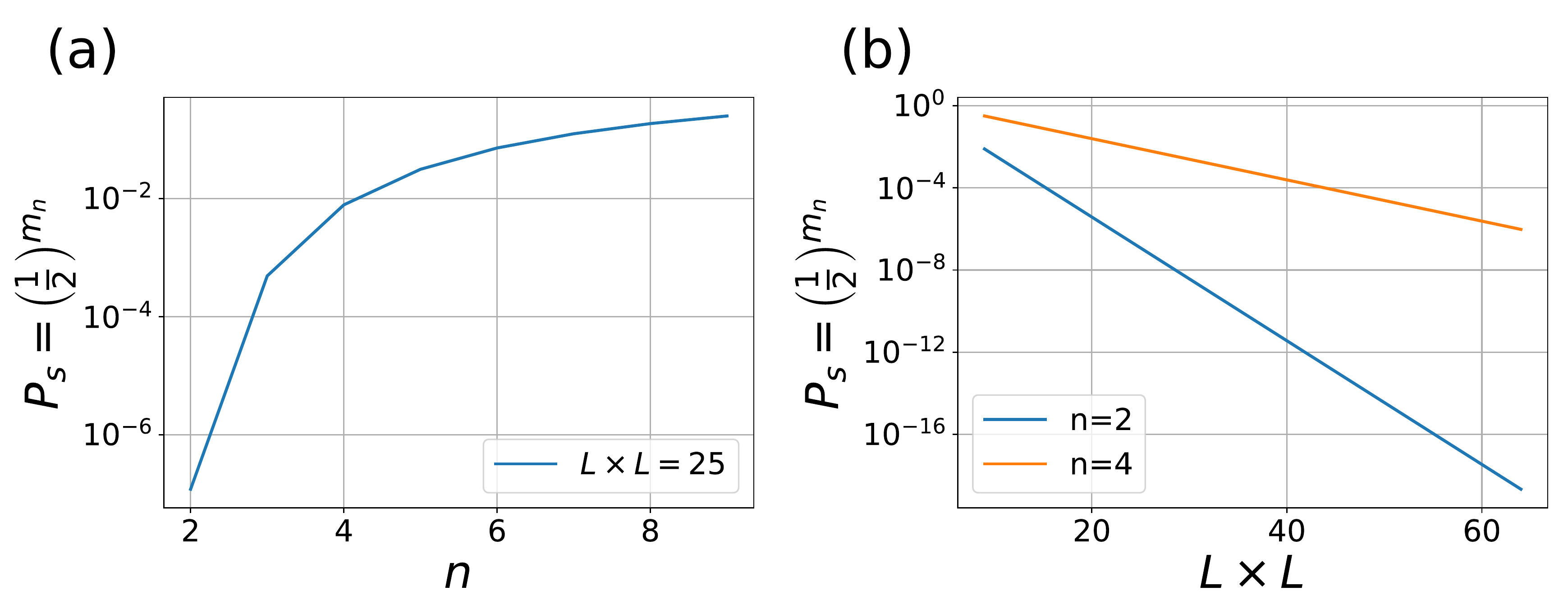} 
\caption{\textbf{a)} Success probability $P_s$ of obtaining a $5 \times 5$ 2D cluster state as a function of the length $n$ of the input LC states to be fused. Going from LC$_2$ to LC$_4$ shows orders of magnitude improvement, underlining that having at least moderately sized LC states is essential for feasible 2D state growth. \textbf{b)} Success probability $P_s$ against 2D cluster state size $L \times L$ for LC$_2$ (bottom, blue line) and LC$_4$ (top, orange line) `building blocks', showing an increased improvement with size when going from one-dimensional states of size 2 to size 4.}
\label{growth}
\end{figure}

Schemes extending the LR scheme for 2D cluster state generation have been proposed \cite{Economou2010}, in which it was shown that a pair of entangled QDs could be used to directly generate a 2D cluster state, reducing the required number of probabilistic fusion of LC state building blocks. Furthermore, it was recently shown that the requirement of two-qubit gates on the entangling emitters can be relaxed by a careful application of pulses and single-qubit gates on the emitters \cite{Terry2018}. However, building on a similar setup and selection rules as the original LR protocol, we expect that the practical limitations discussed above will also limit the achievable size of photonic states that can be obtained with this protocol.

An alternative approach to generating a 2D cluster state is that of fusing LC states. We show that having high fidelity LC states of moderate length is essential for using one-dimensional states as building blocks. Consider a 2D cluster state of size $L \times L$. If we start with number of linear cluster states of size $n$, then the number of steps required to at least reach a 2D cluster state of size $L \times L$ is at least $m_n = \frac{L^2 - n}{n-1}$: assuming that we have enough linear clusters to start with, each fusion process will (on average) increases the cluster size by $n(m_n+1)-m_n$ (noting that each fusion step leaves the fused qubit redundantly encoded with 2 photons in type II fusion, and disregarding the final layout of the 2D state for generality and simplicity). Clearly, we ignore the cases when $n > L^2$ as the probability saturates for $n=L^2$. We show how the probability scales for a 2D cluster state of size $5 \times 5$ as a function of the `building block' size (i.e. the size of the initial cluster states) in Fig.~\ref{growth}a). This clearly demonstrates that the probability increases exponentially before saturating, showing a significant jump when going from linear cluster sizes of 2 to 4.

This increase in success probability is further emphasised when one considers increasing the 2D cluster state size. In Fig.~\ref{growth}b), we show how the difference in probability increases with increasing 2D state size $L \times L$. This approach assumes that upon failure, we have enough resources to replace the linear cluster state and try again. The results of this relatively na\"ive and basic analysis are further backed by an alternative approach presented in Ref.~\citen{Gross2006}, in which Gross et al. fuse linear clusters by `weaving' $n+1$ linear clusters of size $n$ to form a cluster state of size $n \times n$. They show that as long as a careful choice of parameters is made, depending on the fusion success probability, then the cluster state can be prepared using $O(n^2)$ edges and the overall success probability approaches unity as n goes to infinity.

Besides having relatively longer linear states as building blocks, the fidelity of these states, indicative of quality, is also an important factor when considering scalability to higher dimensions \cite{Segovia2015, Morley2017}, as it will determine the `percolation' or `edge-bound' probability. Fortunately, our approach can deliver on both counts by producing LC$_4$ states with high fidelity at a respectable generation rate.

\section{Proposal for deterministic scheme using DQD}
\label{app:Deter}

Motivated by recent theoretical and experimental work, we propose extending our Raman protocol to a double quantum dot (DQD) system, where, depending on the relative strength of the exchange interaction and transition energy detuning between the two QDs, either joint measurements on the DQD system can be performed, whilst leaving the photon-entangling hole spin state unaffected, or oscillations between joint states can be detected without collapsing the system joint state.  In the following, we will discuss two possibilities of extending our protocol in such a way.

A)\textit{Electrical control}:
During the past few years, great progress has been made in synthesising and controlling quantum dot molecules, both in stacked \cite{Stinaff2006, Krenner2005, Kim2011} and lateral \cite{Songmuang2003, Wang2008} geometries. A Raman-spin flip DQD scheme was shown in Ref.~\citen{Vora2015}, in which the external field is applied in Faraday geometry and the Raman spin-flips occur between the singlet $S$ and triplet $T_0$ states of the system. Whilst this configuration would not allow screening of the dominant fluctuation component of the Overhauser field, such a setup would, in principle, allow a current measurement scheme to be applied and signal the Raman events. In fact, the standard singlet-triplet spin-blockade used in gated-DQDs \cite{Shaji2008} could be used to detect current drops, signalling the Raman event. This would require operation round the (1,0),(1,1),(2,0) triple-point at a negative bias, making use of the the additional charge state $S(2,0)$.
\begin{figure}[t!]
\centering
\includegraphics[width=\linewidth]{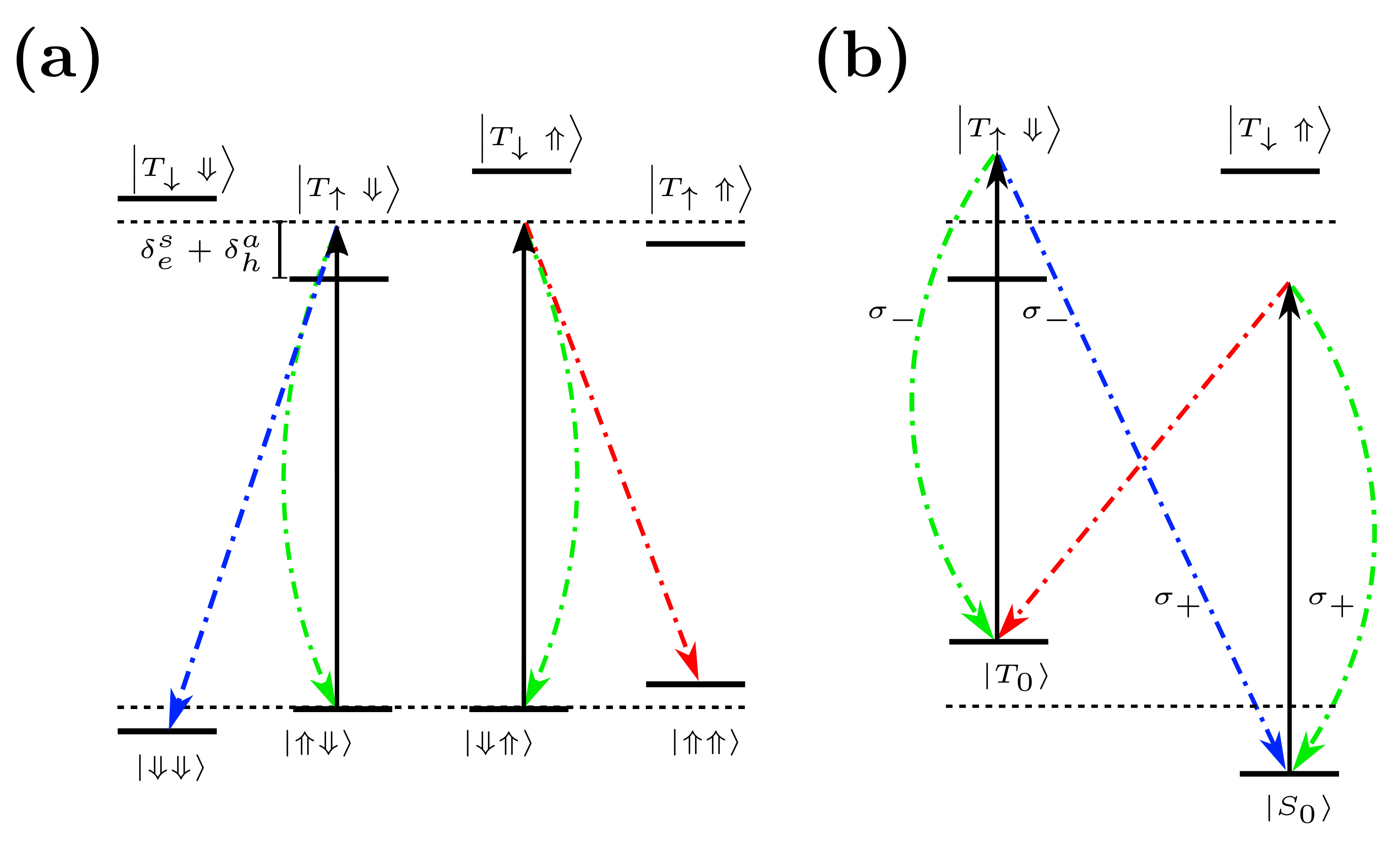} 
\caption{\textbf{a)} Extending the Raman spin-flip protocol to a DQD setup in Voigt geometry, where the two QDs are sufficiently detuned (relative to the exchange interaction), allowing the optical addressing of a single spin. \textbf{b)} An alternative setup in Faraday geometry \cite{Vora2015}, in which the initial state would be a superposition of $S$ and $T_0$ states.}
\label{DQD}
\end{figure}
Addressing and manipulation of these singlet and triplet states in optically-active DQDs have been recently been demonstrated for QD molecules \cite{Vora2015, Greilich2011, Kim2011, Weiss2012, Elzerman2011}, whereas the current transport measurements has been long understood for surface-defined QDs. This route would require a hybrid gated and optically-active device, which, although certainly challenging, might nonetheless present a feasible route.

B)\textit{Optical control}: 
A more attractive alternative to having a gated structure would be to have an all-optical non-invasive spin readout technique, provided by the rich energy level structure for these systems. In quantum dot molecules, this can be achieved by using the distributed trion state, with the ancilla spin being empty, whilst the host spin being singly-electron charged. The spin readout technique was demonstrated experimentally performing resonance fluorescence (RF) on the $\Ket{\downarrow_s, 0_a} \leftrightarrow \Ket{\downarrow_s,\downarrow \Uparrow_a}$ transition, which is decoupled from the main spin-flip transition \cite{Vamivakas2010}. This technique could be readily extended to hole spin systems with an analogous level structure. A similar setup was demonstrated experimentally in Ref.~\citen{Kim2008}, where use of these cycling transitions was made to detect the flips of the host spin state.  Both these setups would require individual addressing of the ancilla and host spin, meaning that the two QDs selected must be sufficiently relatively far-detuned, which could be achieved by tuning the bias voltage over the sample, decreasing the exchange energy splitting \cite{Greilich2011}. Alternatively, for samples with a much stronger singlet-triplet splitting, optical addressing of the joint states would be more feasible. In the singlet-triplet Raman scheme in Faraday geometry discussed  in Ref.~\citen{Vora2015} (Fig.~\ref{DQD}b), spin readout of the singlet state can be performed by using the decoupled cycling transition $T_+ \leftrightarrow R_{++}$ \cite{Weiss2012}.

\section{Quantum state tomography}
\label{app:Tom}

\begin{figure}[t!]
\flushleft
\includegraphics[width=1\linewidth]{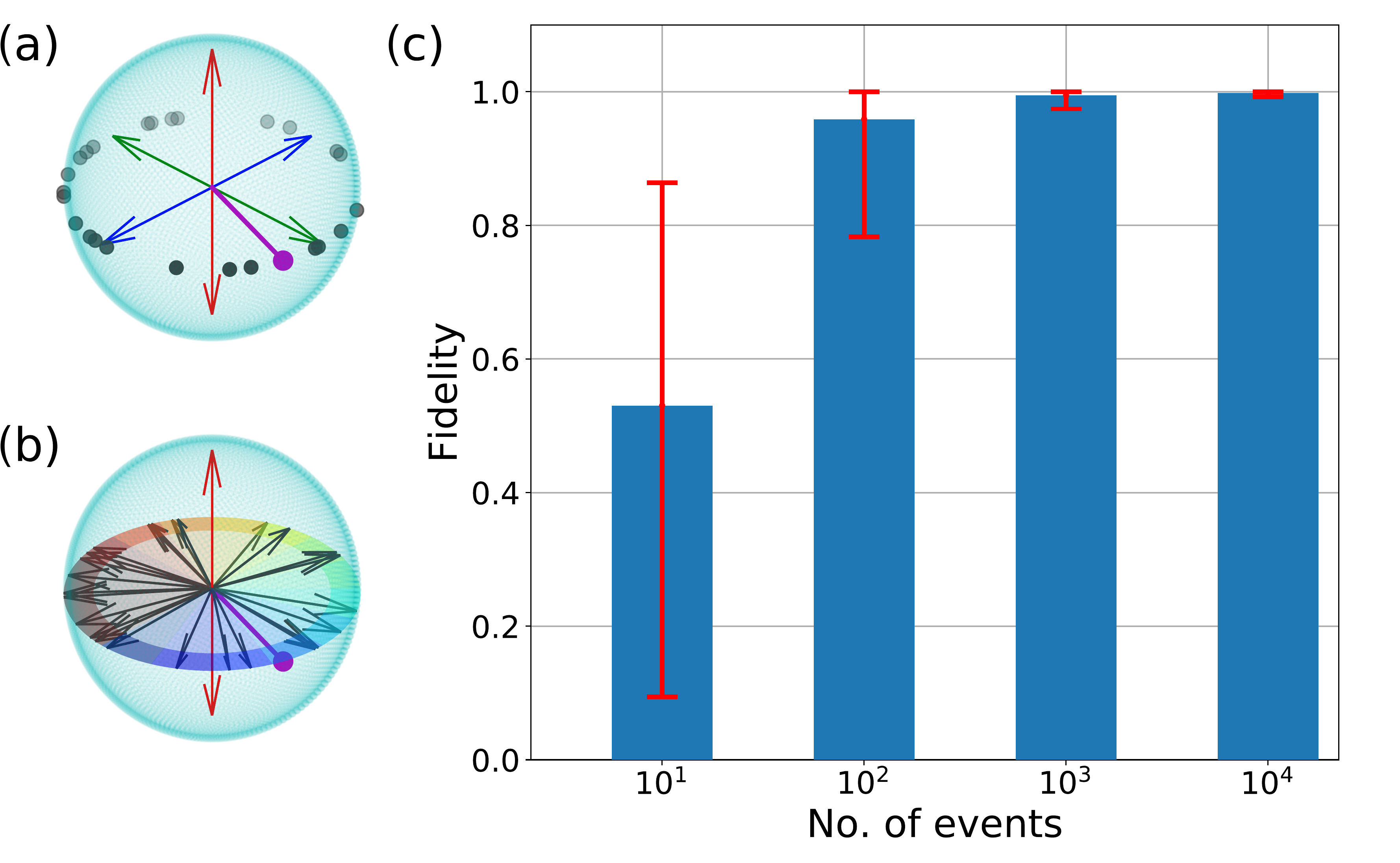} 
\caption{\textbf{a)} Bloch sphere representation of the problem: the actual state  to be reconstructed (purple) gains a random phase (dots) prior to every measurement, with the measurement bases given by the arrows. \textbf{b)} An equivalent picture where the state is fixed, with the `random' measurement basis given by the phases. \textbf{c)} Fidelity for 16 grouped projectors, showing, as expected, an increase in the Fidelity for higher numbers of events.}
\label{tomo}
\end{figure}
Quantum state tomography (QST) allows one to completely characterise an unknown quantum state, as long as an ensemble of identical copies of such a state can be created in the experiment. Despite the wide range of tomographic techniques in existence \cite{Bolduc2017, Kwiat2001,Paris2004}, the aim is typically to use sets of repeated measurements on the ensemble, the results of which enable the reconstruction of the original state.

Our probabilistic protocol then presents an obvious question as to how would one obtain multiple copies of the cluster state, since for each realisation, the phases imprinted on the photonic qubits are random, and are only know post-detection. Using conventional reconstruction techniques would then average over the coherences of the cluster state, losing the entanglement information.

Despite this apparent downfall, the fact that the random phases can be determined post-measurement means that this problem can be reformulated in a `static frame' with respect to the state, that is, the state is not imprinted with the phases, instead, in this frame, the effective basis chosen for the actual measurement rotates for each measurement due to the different phases. The problem then reduces to reconstructing a state when the measurement basis used is different for each measurement. We emphasise that this does not mean that the experimentally chosen basis is actually rotated for each measurement. The random measurement projectors can then be grouped as $\{P^{(j)}_1,P^{(j)}_2,...P^{(j)}_{n_j}\}$ by proximity on the Bloch sphere into projectors $\{\mathcal{P}_j:~ 1 \leq j \leq K\}$ to be used for reconstruction. 

As a proof of principle, we used the state $c_1 \Ket{H} + c_2 \Ket{V}$ for each experiment, where $c_1$ and $c_2$ are two random complex numbers, so that each time we have a different measurement projector. Using Maximum Likelihood Estimation (MLE) and the Cholesky decomposition for the density matrix, we performed QST for various numbers of grouped projectors, the results of which are shown in Fig.~\ref{tomo}. As the number of of $\mathcal{P}_j$ is increased, the fidelity rises, as expected. However, for lower numbers of events, the fidelity peaks at a number of grouped projectors, and then starts declining again. This drop is due to the failure of the Gaussian assumption used for MLE. This failure is expected to affect fidelities for higher event numbers as we increase the number of projectors.

\bibliography{RamanProtBib}
\bibliographystyle{unsrt}

\end{document}